\newtheorem{theorem}{Theorem}[section]
\newtheorem{lemma}[theorem]{Lemma}
\newtheorem{prop}[theorem]{Proposition}
\theoremstyle{definition}
\newtheorem{definition}[theorem]{Definition}
\theoremstyle{remark}
\newtheorem{remark}[theorem]{Remark}
\numberwithin{equation}{section}
\newcommand{\DIS}{\displaystyle}
\newcommand{\eps}{\varepsilon}
\newcommand{\fie}{\varphi}
\newcommand{\s}{{S}}
\newcommand{\cdd}{\cdot\cdot}
\newcommand{\ti}[2]{\tau_{#1\cdd#2}}
\newcommand{\tih}[3]{\tau_{#1{~\!\widehat{#2}~\!}#3}}
\newcommand{\Vi}[2]{V_{#1\cdd#2}}
\newcommand{\Vih}[3]{V_{#1{~\!\widehat{#2}~\!}#3}}
\newcommand{\bigF}[1]{\mbox{\large$\displaystyle #1$}}
\begin{document}

\title{Discretisations of constrained KP hierarchies}

\author{Ralph Willox}
\address{Graduate School of Mathematical Sciences\\
the University of Tokyo\\
3-8-1 Komaba, Meguro-ku, Tokyo 153-8914, Japan}
\email{willox@ms.u-tokyo.ac.jp}
\thanks{R.W. would like to acknowledge support from the Japan Society for the Promotion of Science, through the JSPS grant : KAKENHI 24540204. }

\author{Madoka Hattori}
\address{Nippon Life  Insurance Company, Actuarial Department\\
1-6-6 Marunouchi, Chiyoda-ku, Tokyo 100-8288 Japan}
\email{hattori34650@nissay.co.jp}

\subjclass[2000]{Primary 37K10, Secondary 39A10.}

\keywords{Integrable discretisations, hierarchies, reductions of lattice equations}

\date{}


\begin{abstract}
We present a discrete analogue of the so-called symmetry reduced or `constrained' KP hierarchy. As a result we obtain integrable discretisations, in both space and time, of some well-known continuous integrable systems such as the nonlinear Schr\"odinger equation, the Broer-Kaup equation and the Yajima-Oikawa system, together with their Lax pairs. It will be shown that these discretisations also give rise to a discrete description of the entire hierarchy of associated integrable systems. The discretisations of the Broer-Kaup equation and of the Yajima-Oikawa system are thought to be new. 
\end{abstract}

\maketitle
\section{Introduction}\label{introduction}
There has been remarkable progress in the study of integrable discrete systems, since the initial discovery of integrable difference schemes for the nonlinear Schr\"odinger (NLS) equation \cite{AbloL} or the Korteweg-de Vries (KdV) equation \cite{Hiro-dKdV}, now more than 35 years ago. However, although tremendous conceptual as well as technical advances were made during the 1980's (see e.g. the remarkable series of papers of which \cite{DateIII,DateIV} constitute the middle part), interest in integrable discretisations temporarily waned during the latter half of that decade, as most research activity focused on purely analytical approaches to integrability, better suited to continuous systems. All this changed dramatically however during the following decade, and this mainly due to a series of simultaneous but at first seemingly unrelated discoveries. First there was the (re-)discovery of a discrete form of the Painlev\'e I equation in the context of a field-theoretical model of 2-dimensional gravity \cite{Brezin}, which led to an explosion of results on discrete Painlev\'e equations, and to the development of integrability tests for discrete mappings \cite{GramPRL} (cf. \cite{GramCIMPA} for a review of the history and the mathematical particulars of the field). Secondly, there was the discovery of the first solitonic cellular automaton \cite{TakahashiS}, followed by the discovery that this system is intimately related to the discrete KdV equation through a special limiting procedure, the ultradiscrete limit \cite{Toki96}. These results and the subsequent realization that systems obtained through `ultradiscretisation' are in fact related to certain exactly solvable lattice models (at their crystal limits), spawned an enormous amount of research activity in both quantum integrable as well as classically integrable cellular automata. However, over the years, a strange situation has developed. Whereas considerable advances have been made in the field of integrable automata, or tropical integrable systems as they are also known, and whereas numerous examples of such systems associated to a large variety of symmetry algebras have been constructed, the 1+1 dimensional, classical, discrete integrable systems that ought to contain the tropical ones at the ultradiscrete limit are still largely unknown (recent advances in the theory of geometric crystals and their relation to Yang-Baxter maps notwithstanding). Similarly, although the relationship between continuous Painlev\'e equations and 1+1 dimensional integrable systems through similarity reduction has become text-book material, the situation in the discrete case is quite different. The structure and properties of discrete Painlev\'e equations are, by now, of course quite well-understood. But the precise relationship these systems bear to discrete 1+1 dimensional (lattice) equations is still very much an open problem. An exception is the case of {\em multiplicative} discretisations (or q-analogues) of continuous integrable systems, in which case not only numerous examples of reductions of q-lattice equations to q-Painlev\'e equations are known, but for which there is also the beginning of a general theory \cite{Haine,Kaji-qP,KakeiK, Lin, Suzuki}. The fact that q-lattice equations allow for a simple notion of `similarity', analogous to the continuous one, is of course partly responsible for this success. It is worth pointing out that a discrete equivalent of a similarity reduction is much harder to define in the case of {\em additive} discrete equations. Such a notion does exist for certain types of  such equations \cite{Nijhoff-PII,Nijhoff-PVI}, but it is very much tied up with the method that is used to generate the original discrete systems -- discretisations of the Gelfand-Dikii hierarchy \cite{Gelfand}, formulated in terms of linear integral equations that arise from an infinite-matrix scheme \cite{Nijhoff-LGD} -- and it is therefore difficult to implement in other cases. 

Which brings us to the third major development that took place during the 1990's: the realization that there is a deep relation between integrable quantum field theories and crucial elements of the theory of classical discrete integrable systems \cite{Kricheveretal}, and the ensuing resurgence of interest in discretisations of classical integrable systems. However, whereas q-deformed versions of the Gelfand-Dikii hierarchies -- i.e. of systems with underlying A-type (affine) symmetry algebras -- are rather easily obtained,  obtaining {\em additive} discretisations is a much harder problem and obtaining a discrete analogue of the Drinfeld-Sokolov construction \cite{Drinfeld} for general symmetry algebras turns out to be even more difficult. In fact, despite recent interesting results for Toda field equations related to certain types of affine Lie algebras \cite{Gari}, one cannot escape the impression that the situation has not changed all that much since the middle of the 1990's, when the problem of constructing discrete integrable systems related to general affine Lie algebras was originally raised \cite{Ward}. This problematic situation is compounded by the fact that, even if for a certain affine Lie algebra a discrete integrable system is known, it almost always corresponds to what is called the principal realization of that algebra. Examples of discrete integrable systems related to other, e.g. homogeneous \cite{FrenkelK, IkedaY}, realizations of affine Lie algebras are exceedingly rare. On the other hand, given an integrable lattice system, to date, there is no known procedure for deducing the associated symmetry algebra and for several well-known discrete integrable systems it is in fact not clear at all what affine Lie algebra they are related to.

The results presented in this paper will hopefully shed some new light on the problematics sketched above. Our main aim is to investigate a discrete analogue of a technique which can be used to obtain 1+1 dimensional integrable systems, associated to non-principal realizations of A-type affine Lie algebras, through dimensional reduction of the KP hierarchy: the so-called symmetry-constraint or symmetry reduction technique.  This technique was introduced in \cite{SidorenkoS} and was developed further in \cite{KonopelchenkoS} and \cite{Cheng}. In a standard dimensional reduction of the KP hierarchy, a subset of the flows in the hierarchy is trivialized, for example by demanding that the solutions of the equations in the hierarchy do not depend on a certain set of variables. In the case of a symmetry constraint however, the squared eigenfunction symmetry -- which in a certain sense realizes the general action of $GL(\infty)$ on the Sato-Grassmannian \cite{Mulase,RW-IP,RW-JMP2} -- is restricted to coincide with one of the flows in the hierarchy, which results in an integrable dimensional reduction. Given this close connection between the symmetry constraint and the general symmetry group that underlies the KP theory, it is natural to implement this technique directly on the KP tau functions, rather than on the nonlinear variables. This reformulation, in fact, leads to important new insights into the nature of the solutions of these hierarchies \cite{Loris1} and it is this approach that is best suited to the discrete setting and that will be adopted here. 

The next section will be devoted to a quick overview of the necessary technical background material on symmetry constraints and on the discrete KP hierarchy. The main object we shall be concerned with throughout this paper, is the celebrated Hirota-Miwa (HM) equation \cite{HirotaM,Miwa}
\begin{multline}
(\mu - \nu)~\! \tau(l +1, m, n) \tau(l, m+1 ,n+1)~\! \\ + ~\! (\nu - \lambda)~\! \tau(l, m+1, n) \tau(l+1, m,n+1) ~\! \\ +~\!  (\lambda-\mu)~\! \tau(l, m, n+1) \tau(l+1, m+1 ,n) ~\! = ~\! 0~\!,\label{HM}
\end{multline}
where the complex valued functions $\tau(l, m,n)$ are, for now, defined on $\mathbb{Z}^3$. The parameters $\lambda, \mu ,\nu\in\mathbb{C}^\times$ are taken to be mutually distinct and their reciprocals play the role of lattice parameters in the continuum limit of \eqref{HM} with respect to the continuous variables $x_1, x_2$ and $x_3$ \cite{Miwa}:
\begin{gather}
x_1 = x_1^0+\frac{l}{\lambda}+\frac{m}{\mu}+\frac{n}{\nu}~\!,\nonumber\\
x_2 = x_2^0+\frac{l}{2\lambda^2}+\frac{m}{2\mu^2}+\frac{n}{2\nu^2}~\!,\label{contlim}\\
x_3 = x_3^0+\frac{l}{3\lambda^3}+\frac{m}{3\mu^3}+\frac{n}{3\nu^3}~\!,\nonumber
\end{gather}
(where $x_1^0, x_2^0$ and $x_3^0$ are arbitrary complex numbers). Indeed, taking the (formal) limits $|\lambda| , |\mu|, |\nu| \rightarrow \infty$ in \eqref{HM} (in any particular order) yields, at lowest order, the Hirota-bilinear form of the KP equation:
\begin{equation}
\left(4 D_{x_3} D_{x_1} - D_{x_1}^4 - 3 D_{x_2}^2 \right) \tau(x_1,x_2,x_3) \cdot \tau(x_1,x_2,x_3) = 0~\!,\label{KP-BF}
\end{equation}
where we have used the same symbol `$\tau$' to denote the continuum limit of the original, discrete, object $\tau(l,m,n)$. Since this is only a minor abuse of notation, we shall adhere to this convention in an attempt to restrict the number of different variable names that will appear in the paper. As for the Hirota $D$ operators, the reader is referred to \cite{JimboM} for their precise definition. 

The HM equation \eqref{HM} is said to be an integrable system first and foremost because it arises in a natural way in the Sato formulation of the KP hierarchy \cite{Miwa}, but also, in a more restricted sense, because it can be obtained as the compatibility condition of a set of linear equations, just as the KdV equation is obtained from its Lax pair or the NLS equation from the Zakharov-Shabat equations. This is the definition of integrability we shall adopt throughout the paper. If a particular nonlinear system, defined on a two dimensional lattice, can be obtained as the compatibility condition of a system of {\em linear} equations involving a non-trivial spectral parameter, then we shall say that that nonlinear system is {\em integrable}.

Returning to the HM equation \eqref{HM}, it should be noted that there is of course only one true degree of freedom in the coefficients of the equation, but that this slightly contrived way  of writing the equation is particularly well-suited to taking the continuum limit. Moreover, this particular parametrization has the added advantage that the continuum limit can be taken with similar ease on the solutions of the equation.
However, the issue of parametric freedom in the coefficients of integrable lattice equations is not an entirely innocuous one. As we shall see (and as was already pointed out in \cite{Glasgowpap}), it can happen for lower dimensional lattice equations that the continuum limit does not yield a non-trivial integrable system for generic values of the coefficients in the equation and that the need arises to impose conditions on those coefficients. The original discrete system, without any restrictions on its coefficients, is nonetheless an integrable system (in the sense explained above) and we shall adopt the convention of referring to such a system as a {\em discrete analogue} of the continuous integrable system that can be obtained from it, at the continuum limit, by suitably restricting the freedom in the coefficients. What we shall call a {\em discretisation} of a continuous integrable system is a lattice equation for which the continuum limit can be carried out without any further restrictions. 

The discrete NLS, Broer-Kaup and Yajima-Oikawa equations we shall derive are integrable discretisations of their continuous namesakes in the above strict sense. These equations, together with the symmetry-constraint method used to obtain them from the HM equation \eqref{HM} will be presented in section 3, which is devoted to the main results of the paper. These include the Lax pairs for the aforementioned systems, but also the interpretation of these systems as discrete forms of the entire associated hierarchy of integrable systems, as explained in section 5. The proofs of the results presented in section 3 make up most of the remainder of the paper. In the final section we shall explain some of the problems one encounters when one tries to extend the reduction method to higher order lattice equations and we shall present a discretisation of a Melnikov system \cite{Melnikov}, i.e. a Boussinesq equation with sources, which we believe to be integrable but for which so far we have been unable to obtain a Lax pair.

\section{Preliminaries}\label{preliminaries}
We define functions $\tau : \mathbb{Z}^r \rightarrow \mathbb{C}$ on an $r$ dimensional lattice, for arbitrary $r\in\mathbb{N}$, characterized by an $r$-tuple $(a_1, a_2, \hdots, a_r)$ of mutually distinct complex numbers $a_j\in\mathbb{C}^\times$, each associated (exclusively) to a particular direction on the lattice. Locally, we shall require $\tau(\ell)$,  $\ell\in\mathbb{Z}^r$, to satisfy the HM-equation \eqref{HM} for every possible triple $(l,m,n)$ of distinct directions on the lattice. The parameters $\lambda,\mu$ and $\nu$ that appear in the equation represent the lattice parameters $a_j$ associated to the $l,m$ and $n$ directions, respectively. In order to make the notation a little lighter, especially in longer formulae, we shall denote a shift in a particular direction as a subscript to the symbol for the function on which that shift acts, and we shall only explicitly state the overall variable dependence of the functions when there is a risk of ambiguity. The HM equation \eqref{HM} shall therefore, from now on,  be written as
\begin{equation}
(\mu - \nu)~\! \tau_l \tau_{mn}~\! + ~\! (\nu - \lambda)~\! \tau_m \tau_{ln} ~\!  +~\!  (\lambda-\mu)~\! \tau_n \tau_{lm} ~\! = ~\! 0~\!.\label{HMs}
\end{equation}
The limit $r\rightarrow\infty$ of this construction defines the {\em discrete KP hierarchy} and its tau functions.
\begin{definition}\label{dKPdef}
The discrete KP hierarchy is the set of all possible HM-equations \eqref{HMs} on the infinite dimensional lattice that is obtained as $r\rightarrow\infty$.
\end{definition}
A tau function will then be defined as a function that satisfies this infinite set of HM-equations. However, in this definition, it is preferable to identify all tau functions that only differ by trivial symmetries: it is clear that if $\tau(\ell)$ satisfies \eqref{HMs}, then so does $c_0 c_1^l c_2^m c_3^n~\! \tau(\ell) $, for arbitrary $c_j\in\mathbb{C}^\times (j=0,\hdots, 3)$, which obviously defines an equivalence relation on the solution space of the HM-equations. 
\begin{definition}\label{taudef}
A tau function for the discrete KP hierarchy is a function $\tau$ in the quotient set of the solution space of all HM-equations in the hierarchy, by the equivalence relation
$$^\forall ~c_j\in\mathbb{C}^\times~ (j=0,\hdots, 3) :\quad c_0 c_1^l c_2^m c_3^n~\! \tau(\ell)~\! \sim~\! \tau(\ell)~\!,$$
for any triple $(l,m,n)$ of distinct directions on the lattice.
\end{definition}

It should be noted that this definition of the discrete KP hierarchy is different from the usual one, formulated in terms of quadratic difference equations of increasing order (cf. \cite{Ohta-dKP} or the appendix \ref{app} in this paper for further details). Definition \ref{dKPdef} however is more elementary than the standard one and it can be easily seen that it implies the latter. This fact, although fundamental, does not seem to be widely appreciated in the integrable systems community and the proof of this statement is therefore included in the appendix \ref{app}.

The discrete KP hierarchy is obtained as the compatibility condition of an infinite set of linear equations, satisfied by the so-called {\em eigenfunctions} $\psi(\ell)$, associated to a tau function $\tau(\ell)$, which are complex valued functions defined on the  same lattice as $\tau$. As the discrete KP hierarchy consists of infinitely many copies of \eqref{HMs}, it suffices to give the linear system associated to that particular equation: 
\begin{gather}
\psi_{lm} ~\!=~\! \frac{1}{\lambda-\mu}~\! \frac{\tau_l \tau_m}{\tau \tau_{lm}}~\!\left[~\!\lambda~\!\psi_m - \mu~\!\psi_l~\!\right]~\!,\nonumber\\
\psi_{mn} ~\!= ~\!\frac{1}{\mu-\nu}~\! \frac{\tau_m \tau_n}{\tau \tau_{mn}}~\!\left[~\!\mu~\!\psi_n - \nu~\!\psi_m~\!\right]~\!,\label{dKP-LP}\\
\psi_{ln} ~\!=~\! \frac{1}{\lambda-\nu}~\! \frac{\tau_l \tau_n}{\tau \tau_{ln}}~\!\left[~\!\lambda~\!\psi_n - \nu~\!\psi_l~\!\right]~\!,\nonumber
\end{gather}
the full set being obtained by repetition of the same basic pattern in all other directions. It is important to stress that the HM-equation \eqref{HMs} is only obtained as the compatibility condition of all 3 equations in \eqref{dKP-LP}, and not of just two of them. The third equation is required to fix the excess (gauge) freedom that would otherwise leave one of the coefficients in the HM equation undetermined. On the other hand, any two equations in \eqref{dKP-LP}, taken together with \eqref{HMs}, imply the third.

There also exists an alternative linear system, which plays a similar role as \eqref{dKP-LP}:
\begin{gather}
\psi^* ~\!= ~\!\frac{1}{\lambda-\mu}~\! \frac{\tau_l \tau_m}{\tau \tau_{lm}}~\!\left[~\!\lambda~\!\psi_l^* - \mu~\!\psi_m^*~\!\right]~\!,\nonumber\\
\psi^* ~\!= ~\!\frac{1}{\mu-\nu}~\! \frac{\tau_m \tau_n}{\tau \tau_{mn}}~\!\left[~\!\mu~\!\psi_m^* - \nu~\!\psi_n^*~\!\right]~\!,\label{dKP-aLP}\\
\psi^* ~\!=~\! \frac{1}{\lambda-\nu}~\! \frac{\tau_l \tau_n}{\tau \tau_{ln}}~\!\left[~\!\lambda~\!\psi_l^* - \nu~\!\psi_n^*~\!\right]~\!.\nonumber
\end{gather}
These relations arise because of the invariance of the HM equation \eqref{HMs} under point reflections $\ell \mapsto -\ell$ with respect to the origin of the lattice, and are therefore said to constitute the {\em adjoint} linear system for the discrete KP hierarchy. Its solutions $\psi^*(\ell)$ are called the adjoint eigenfunctions associated to $\tau(\ell)$. Both sets of linear equations will play a crucial role in what follows.

The discrete KP hierarchy can also be obtained directly from the fermionic construction of the continuous KP hierarchy \cite{JimboM} to which it is related by the so-called Miwa-transformation \cite{Miwa},
\begin{equation}
^\forall j= 1, 2, 3, \cdots~:\quad x_j = \sum_{i=1}^\infty~\!\frac{a_i^{-j}}{j} \ell_i~\!,\label{Miwatransf}
\end{equation}
which generalizes the relations in \eqref{contlim}.
Here, $x_1, x_2, x_3, \cdots$ represent the coordinates in which the continuous hierarchy is expressed and $\ell_1, \ell_2, \ell_3,\cdots$ represent the discrete KP lattice coordinates, with associated lattice parameters $a_1, a_2, a_3,\cdots$. Most importantly, this relation implies that the discrete KP hierarchy can be associated with the $A_\infty$ Kac-Moody algebra \cite{JimboM}, just as the continuous KP hierarchy. Integrable lattice equations associated with affine Lie algebras of type $A_N^{(1)}$ \cite{Hata} can be obtained from the discrete KP hierarchy by a process called {\em reduction}. For example, imposing the periodicity condition $\tau_{lm} = \tau$ on the tau functions and requiring that $\lambda=-\mu$ in the HM equation \eqref{HMs}, one obtains the Hirota bilinear form of the discrete KdV equation
\begin{equation}
(\mu - \nu)~\! \tau_{m'} \tau_{mn}~\! + ~\! (\nu +\mu)~\! \tau_m \tau_{m' n} ~\!  -~\!  2\mu~\! \tau \tau_n ~\! = ~\! 0~\!,\label{BdKdV}
\end{equation}
where a primed subscript denotes a down-shift on the lattice, e.g.: $m' : m\mapsto m-1$. Introducing continuous coordinates through \eqref{contlim}, this equation indeed yields the Hirota form of the KdV equation at the continuous limit $|\mu|, |\nu|\rightarrow \infty$,
\begin{equation*}
\left(4 D_{x_3} D_{x_1} - D_{x_1}^4\right) \tau \cdot \tau = 0~\!,
\end{equation*}
and one can therefore claim that the 2-dimensional lattice equation \eqref{BdKdV}
which the reduced tau functions now satisfy,  should be associated with the same $A_1^{(1)}$ affine Lie algebra that underlies the continuous equation. Note that the discrete equation \eqref{BdKdV} still possesses one degree of freedom in its coefficients.

Other famous 1+1 dimensional integrable systems, the discretisations of which can be obtained through similar reductions, include the modified KdV equation, the sine-Gordon equation or the Boussinesq equation (which will appear in the final part of the paper). However, discrete analogues (or, for that matter, discretisations) of 1+1 dimensional integrable systems that arise as non-principal realizations of (A-type) affine Lie algebras cannot be obtained through this elementary reduction technique. Perhaps the most famous example of such a system is the NLS equation, which is obtained as a homogeneous realization of the $A_1^{(1)}$ algebra \cite{IkedaY, Ikedaetal}. The (fully discrete) Ablowitz-Ladik equation \cite{AbloL} is without any doubt the best-known discretisation of the NLS equation but in \cite{Suris} another, related, discretisation is presented which is in fact better suited to numerical calculations. Both discretisations have been shown to be related to the 2-component KP hierarchy \cite{Sadakane}. However, as we shall see, there exists a remarkably simple discretisation which, although originally discovered in the context of the 2-component KP hierarchy \cite{DateIV}, can be obtained directly from the HM equation \eqref{HMs} through a new reduction technique. From a numerical point of view, this discretisation has the added benefit of possessing a semi-continuous limit which yields a time-discretisation of the NLS equation, as opposed to the Ablowitz-Ladik equation (or related discretisations) which only offer semi-discrete equations that are space-discretisations of NLS. Unfortunately, apart from a single oblique reference to it at the end of \cite{Suris}, this discrete form of NLS seems to have been completely forgotten over the years.

The mathematical object that is of crucial importance in the reduction technique we shall introduce in this paper, is the so-called {\em squared eigenfunction potential}.
\begin{theorem}\label{Om-def}
For any tau function $\tau$ for the discrete KP hierarchy, given an eigenfunction $\psi$ and an adjoint eigenfunction $\psi^*$ associated to $\tau$, there exists a squared eigenfunction potential  $\Omega(\psi,\psi^*)$, uniquely defined by the difference relations
\begin{equation*}
\Delta_{\ell_j}\Omega(\psi,\psi^*) =~\! \psi_{\ell_j} \psi^*\qquad (j= 1, 2, \hdots)~\!,
\end{equation*}
up to an additive constant.
\end{theorem}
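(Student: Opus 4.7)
The plan is to recognise the existence of $\Omega$ as the closedness of a discrete $1$-form on the lattice, and then to construct $\Omega$ by integrating this form along any path from a fixed base point. Existence thereby reduces to a plaquette compatibility check, and the additive constant in the statement is simply the arbitrary initial value at the base point.

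First I would translate the system of difference relations into a compatibility condition on each $(j,k)$-plaquette. Equating the two natural expressions for $\Omega_{\ell_j\ell_k} - \Omega$, obtained by summing increments along the two paths through the square with corners $\ell$, $\ell+e_j$, $\ell+e_k$, $\ell+e_j+e_k$ and using the appropriately shifted defining relation on each edge, one is led to the bilinear identity
\begin{equation*}
\psi_{\ell_j\ell_k}\bigl(a_j\psi^*_{\ell_j} - a_k\psi^*_{\ell_k}\bigr) = \bigl(a_j\psi_{\ell_k} - a_k\psi_{\ell_j}\bigr)\psi^*,
\end{equation*}
in which $a_j,a_k$ are the lattice parameters associated with the directions $\ell_j,\ell_k$.

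Second, I would verify this identity by invoking simultaneously the $(j,k)$-entries of both linear systems. The eigenfunction equation in \eqref{dKP-LP} rewrites the parenthesised combination on the right-hand side as $(a_j - a_k)(\tau\tau_{\ell_j\ell_k}/\tau_{\ell_j}\tau_{\ell_k})\,\psi_{\ell_j\ell_k}$, while the adjoint eigenfunction equation \eqref{dKP-aLP} rewrites the parenthesised combination on the left-hand side as $(a_j - a_k)(\tau\tau_{\ell_j\ell_k}/\tau_{\ell_j}\tau_{\ell_k})\,\psi^*$. After these two substitutions both sides of the identity collapse to $(a_j - a_k)\psi_{\ell_j\ell_k}\psi^*\tau\tau_{\ell_j\ell_k}/(\tau_{\ell_j}\tau_{\ell_k})$, proving compatibility on every plaquette.

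With closedness secured, $\Omega$ is constructed by choosing a base point $\ell^{0}$ together with an arbitrary value $\Omega(\ell^{0})$, and then defining $\Omega(\ell)$ as $\Omega(\ell^{0})$ plus the sum of the prescribed increments along any lattice path from $\ell^{0}$ to $\ell$; path-independence is ensured by the plaquette compatibility, so $\Omega$ is globally well-defined and the only remaining freedom is the choice of $\Omega(\ell^{0})$, which is precisely the additive constant of the statement. The only nontrivial step is the compatibility computation, whose delicacy lies in the fact that it requires the symmetric use of both \eqref{dKP-LP} and \eqref{dKP-aLP}; either system on its own would leave a nonvanishing residue, and it is precisely the antisymmetric-in-$(j,k)$ structure shared by the two linear systems that makes the identity close.
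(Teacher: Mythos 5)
Your argument is correct and is essentially the paper's own: the paper verifies exactly this plaquette compatibility, computing $\Delta_n(\psi_m\psi^*)=\frac{\mu\nu}{\mu-\nu}\frac{\tau_m\tau_n}{\tau\tau_{mn}}\left[\psi_n\psi_n^*-\psi_m\psi_m^*\right]$ from the linear and adjoint linear systems and noting its symmetry under $m\leftrightarrow n$, which is an equivalent rearrangement of your bilinear identity, with existence up to an additive constant then following by path integration just as you describe. The only caveat, which the paper itself flags, is that for genuinely arbitrary tau functions (where $\tau$ may vanish and the linear systems degenerate) the full well-definedness is deferred to the cited fermionic/vertex-operator arguments, so your elementary proof reproduces the paper's sketch rather than that more sophisticated treatment.
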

Here, we denote by $\Delta_{\ell_j}$ the forward difference operator in the $\ell_j$ direction : 
\begin{equation}
\Delta_{\ell_j} f(\ell_j) = a_j \big[ f(\ell_j+1) - f(\ell_j) \big]~\!.\label{Del}
\end{equation}
The well-definedness of the squared eigenfunction potential for arbitrary tau functions is shown in \cite{RW-IP, RW-JMP1} using the fermionic representation of the KP hierarchy and properties of the KP vertex operators. The connection between $\Omega(\psi,\psi^*)$ and the KP vertex operators was first noted in \cite{Adlervmb}. Whereas the proof of the existence of the squared eigenfunction potential requires relatively sophisticated techniques, the fact that the defining equations of the potential are compatible, can be checked by elementary calculations. For example, using the evolution along the lattice directions $m$ and $n$ in the linear problem \eqref{dKP-LP} and its adjoint \eqref{dKP-aLP}, one immediately finds that
\begin{align*}
\Delta_n (\psi_m \psi^*) = \frac{\mu \nu}{\mu-\nu}~\!\frac{\tau_m \tau_n}{\tau \tau_{mn}}~\!\left[ \psi_n \psi^*_n - \psi_m \psi^*_m \right]~\!,
\end{align*}
which is symmetric in the $m$ and $n$ directions. Hence, $\Delta_n (\psi_m \psi^*) = \Delta_m (\psi_n \psi^*)$.

The connection with the KP vertex operator becomes apparent in the following theorem:
\begin{theorem}\label{BDT}
Given a tau function $\tau$ for the discrete KP hierarchy and an associated squared eigenfunction potential $\Omega(\psi,\psi^*)$, the map
$$ \tau ~\!\mapsto~\!  \tau \times \Omega(\psi,\psi^*)~\!,$$
is a map between tau functions, or in other words, $\hat\tau= \tau~\! \Omega(\psi,\psi^*)$ satisfies the discrete KP hierarchy.
\end{theorem}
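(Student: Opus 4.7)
The plan is to verify directly that $\hat\tau = \tau\,\Omega(\psi,\psi^*)$ satisfies the HM equation \eqref{HMs}, by substituting $\hat\tau_a = \tau_a\Omega_a$ into \eqref{HMs} and using the defining relations for $\Omega$ together with the linear systems \eqref{dKP-LP} and \eqref{dKP-aLP} to reduce the result to the HM equation for $\tau$ itself. In the notation of the theorem, the defining relations read
$$\lambda(\Omega_l-\Omega)=\psi_l\psi^*,\qquad \mu(\Omega_m-\Omega)=\psi_m\psi^*,\qquad \nu(\Omega_n-\Omega)=\psi_n\psi^*,$$
and, iterating, the doubly shifted values can be written as
$$\Omega_{lm}=\Omega_l+\tfrac{1}{\mu}\psi_{lm}\psi^*_l,\qquad \Omega_{ln}=\Omega_l+\tfrac{1}{\nu}\psi_{ln}\psi^*_l,\qquad \Omega_{mn}=\Omega_m+\tfrac{1}{\nu}\psi_{mn}\psi^*_m,$$
the two possible orderings of the shifts giving the same value by the compatibility already verified in the paragraph preceding the theorem.

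Plugging these expansions into the left-hand side of \eqref{HMs} for $\hat\tau$ and sorting by the power of $\Omega$, the $\Omega^2$ coefficient is exactly $\Omega^2$ times the HM equation for $\tau$ and hence vanishes by hypothesis. The $\Omega^1$ coefficient contains a contribution of the form $(\psi_l\psi^*/\lambda)$ times the HM equation for $\tau$, which again vanishes; the remaining $\Omega^1$ part can then be simplified by a second invocation of the HM equation for $\tau$ and by using the relations of \eqref{dKP-LP} to rewrite the doubly shifted $\psi_{ab}$ in terms of $\psi_l,\psi_m,\psi_n$. A judicious grouping of the result exhibits it, once more, as a multiple of the HM equation for $\tau$. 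The $\Omega^0$ coefficient is handled in the same spirit, but now one must invoke \eqref{dKP-aLP} in addition to \eqref{dKP-LP} in order to eliminate the bare $\psi^*$ in favour of the shifted $\psi^*_l,\psi^*_m,\psi^*_n$ before the final appeal to the HM equation for $\tau$.

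The principal obstacle is combinatorial rather than conceptual: expanding the six products $\Omega_a\Omega_b$ produces a sizeable number of cross terms, and the challenge is to organise them so that the HM equation for $\tau$, together with \eqref{dKP-LP} and \eqref{dKP-aLP}, produces the required cancellations cleanly. The cyclic symmetry $(l,\lambda)\to(m,\mu)\to(n,\nu)\to(l,\lambda)$ of the HM equation and of the linear systems provides a useful consistency check on the bookkeeping. A conceptually slicker alternative, which I would mention but not pursue, proceeds via the fermionic / vertex-operator interpretation alluded to after Theorem \ref{Om-def}: in that language the map $\tau\mapsto\tau\,\Omega(\psi,\psi^*)$ realises the action of a single pair of bosonised fermion operators on $\tau$ and hence manifestly remains within the $GL(\infty)$-orbit of tau functions. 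The direct bilinear route sketched above has the advantage of staying entirely within the elementary discrete framework and of not requiring any fermionic machinery.
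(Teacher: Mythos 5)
Your strategy is sound and is, at bottom, the same as the paper's: a direct algebraic verification that $\hat\tau=\tau\,\Omega(\psi,\psi^*)$ satisfies \eqref{HMs}, using the difference relations for $\Omega$ together with \eqref{dKP-LP} and \eqref{dKP-aLP}. The difference is in the decomposition, and it matters for the length of the computation. You expand every $\Omega_{ab}$ down to the unshifted $\Omega$ and sort by powers of $\Omega$; this is legitimate, and it is worth saying explicitly \emph{why} each coefficient must vanish separately, namely that $\Omega$ is only defined up to an additive constant, so the whole expression is a polynomial in that constant whose coefficients must vanish independently. Carrying your plan out, the $\Omega^2$ and (after the two invocations of \eqref{HMs} you describe) the $\Omega^1$ and $\Omega^0$ coefficients do indeed cancel, with one correction: the adjoint system \eqref{dKP-aLP} is already indispensable at order $\Omega^1$, because after using $A+B+C=0$ the surviving term $(\lambda-\mu)\tau_n\tau_{lm}\bigl(\tfrac{1}{\nu}\psi_n-\tfrac{1}{\mu}\psi_m\bigr)\psi^*$ carries the bare $\psi^*$, which must be traded for $\psi_l^*,\psi_m^*$ before it can cancel against the $\psi_{ab}\psi_a^*$ contributions; it is not only needed at order $\Omega^0$. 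The paper avoids the power-of-$\Omega$ expansion altogether: it first applies \eqref{dKP-aLP} to the unshifted $\psi^*$ in each coefficient, e.g. $(\mu-\nu)\tau_{mn}=\frac{\tau_m\tau_n}{\tau\,\psi^*}\bigl(\mu\psi_m^*-\nu\psi_n^*\bigr)$, which extracts the common prefactor $\tau_l\tau_m\tau_n/(\tau\psi^*)$ from all three terms of \eqref{HMs}; it then uses the \emph{two} alternative one-step expansions of each doubly shifted potential, $\Omega_{mn}=\Omega_n+\tfrac{1}{\mu}\psi_{mn}\psi_n^*=\Omega_m+\tfrac{1}{\nu}\psi_{mn}\psi_m^*$, so that the $\psi_{mn}$ cross terms cancel pairwise inside each bracket, leaving $\mu\,\Omega_l\psi_m^*\Omega_n-\nu\,\Omega_m\psi_n^*\Omega_l+\text{cyclic}$, which telescopes to zero identically. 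Your route buys nothing over this except self-containedness of the bookkeeping, at the cost of a substantially heavier computation whose hardest steps you defer to a ``judicious grouping''; if you do pursue it, the grouping that works is to absorb the residual $\psi^*$ term via \eqref{dKP-aLP} into the $\psi_{mn}\psi_m^*$ contribution first, after which \eqref{dKP-LP} reduces all doubly shifted $\psi$'s to $\psi_l,\psi_m,\psi_n$ and the remainder vanishes by inspection.
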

The proofs given in \cite{RW-JMP1, RW-IP} also deal with the fermionic representation of this map and with its interpretation as an extension of the action of a solitonic vertex operator. However, it turns out that a simple, algebraic, verification that $\hat\tau$ indeed satisfies the HM equation \eqref{HMs} is possible. It suffices to verify that: 
\begin{align*}
(\mu - \nu)~\! \hat\tau_{l}& \hat\tau_{mn}~\! + ~\! (\nu -\lambda)~\! \hat\tau_m \hat\tau_{ln} ~\!  +~\! (\lambda - \mu)~\! \hat\tau_n \hat\tau_{lm}\\
&=~\! \frac{\tau_l \tau_m \tau_n}{\tau~\! \psi^*} \left[ \Omega_l \big( \mu \psi^*_m - \nu \psi^*_n\big) \Omega_{mn} + \text{cyclic~ perm.} \begin{pmatrix}l\rightarrow m\rightarrow n\rightarrow l\\ \lambda\rightarrow\mu\rightarrow\nu\rightarrow\lambda\end{pmatrix} \right]\\
&\qquad=~\!\frac{\tau_l \tau_m \tau_n}{\tau~\! \psi^*} \left[ \mu \Omega_l \psi^*_m \Omega_n - \nu \Omega_m \psi^*_n \Omega_l + \text{cyclic~ perm.}  \right] \equiv 0~\!,
\end{align*}
because of the HM equation \eqref{HMs} and the linear equations \eqref{dKP-aLP}. This result was first obtained in \cite{Nimmo-DT}, using the notion of a binary Darboux transformation.

Under the Miwa-transformation \eqref{Miwatransf}, the continuum limit of the defining relations for the squared eigenfunction potential yields the coupled equations \cite{RW-IP}
\begin{gather*}
\partial_{x_j}\big(\Omega(\psi,\psi^*)\big) ~\!=~\! j \psi^* p_{j-1}(\tilde\partial)(\psi) - \sum_{k=1}^{j-1} \partial_{x_k}[\psi^* p_{j-k-1}(\tilde\partial) (\psi) ]\quad(j=1,2,\hdots)~\!,
\end{gather*}
subject to the conditions
\begin{equation}
p_j(-\tilde\partial)(\psi)~\!=~\! \psi~\! p_{j-1}(-\tilde\partial) (\partial_{x_1}\log\tau)\quad(j=2,3,\hdots)\label{KP-LP}~\!,
\end{equation}
and
\begin{equation}
p_j(\tilde\partial)(\psi^*)~\!=~\! -\psi^*~\! p_{j-1}(\tilde\partial) (\partial_{x_1}\log\tau)\quad(j=2,3,\hdots)\label{KP-aLP}~\!,
\end{equation}
where $p_j(\pm\tilde\partial)$ denotes the weight $j$ Schur polynomial, obtained from the generating function $e^{\sum_{i=1}^{+\infty} u_i\lambda^i} = \sum_{j=0}^{+\infty} p_j(u_1,u_2, u_3, ...) \lambda^j$, in the (weighted) partial differential operators $\pm \partial_{x_1}, \pm\partial_{x_2}/2,  \pm\partial_{x_3}/3, \cdots$. The systems \eqref{KP-LP} and \eqref{KP-aLP} are very compact forms of the Zakharov-Shabat system for the KP hierarchy and its adjoint. It can be shown that the above relations in fact define a (formal) exact differential
\begin{equation*}
d\Omega(\psi,\psi^*) ~\!=~\! \psi\psi^* dx + (\psi_{x_1}\psi^*  - \psi \psi_{x_1}^*) dx_2 + \cdots~\!,
 \end{equation*}
(where subscripts stand for partial derivatives), hence the name squared eigenfunction potential. It is well-known \cite{KonopelchenkoS} that the $x_1$ derivative of the squared eigenfunction $\psi\psi^*$ is the generator for most of the symmetries of the KP hierarchy, a fact that is of course rooted in its deep relation to the KP vertex operator, but which can also be understood in the following way. Given a KP tau function $\tau$, the (continuous) KP equation is usually expressed in terms of the variable $u= \partial_{x_1}^2 \log \tau$. Now, since Theorem \ref{BDT} also holds at the continuum limit, one can use the tau function $\hat\tau$ to obtain a new solution to the KP equation: $\hat u= \partial_{x_1}^2 \log \hat\tau$. As the squared eigenfunction potential $\Omega$ is only defined up to an arbitrary constant, it follows that the expansion
$$\hat u ~\!=~\! \partial_{x_1}^2 \log (\tau \Omega + \tau/\eps)~\!=~\! u + \partial_{x_1}^2 \log(1+\eps~\!\Omega) = u + \eps~\! (\psi\psi^*)_{x_1} +{\mathcal O}(\eps^2)~\!,$$
satisfies the KP equation to all orders in $\eps$, and thus also up to first order. Hence, $(\psi\psi^*)_{x_1}$ generates a generalized symmetry for the KP equation.

Standard reductions of the KP equation (hierarchy) to 1+1 dimensional integrable equations rely on a simple invariance with respect to translations in a particular direction. E.g., requiring that $u_{x_2}=0$ (or, more fundamentally that $\tau_{x_2}\sim\tau$) reduces KP to the KdV equation. A {\em symmetry reduction} of the KP equation (and of the hierarchy as a whole) relies on constraining the general symmetry generator $(\psi\psi^*)_{x_1}$ to translations in a particular direction: $u_{x_k} = (\psi\psi^*)_{x_1}$. Such a restriction involving both the solutions to the linear equations $(\psi, \psi^*)$ and those of the nonlinear equations $(u)$ is called a {\em symmetry constraint} \cite{KonopelchenkoS} or a $k$-constraint \cite{Cheng}.

\section{Main results}\label{mainresults}
In \cite{Loris1}, symmetry constraints were reinterpreted as constraints on the KP tau functions of the form: $\tau_{x_k} = \tau~\!\Omega(\psi,\psi^*)$. The aim of this paper is to implement a similar constraint in the discrete case, and to obtain explicit forms for the systems that result from the first few constraints: i.e. for $k=1,2$ and $3$. 
\begin{theorem}\label{maintheorem}
Let $S$ represent an arbitrary shift on the discrete KP lattice, and let $\gamma\in\mathbb{C}^\times$. The constraint
\begin{equation}
\gamma~\!S\tau ~\!=~\! \tau~\!\Omega(\phi,\phi^*)~\!,\label{constraint}
\end{equation}
on a tau function $\tau$ and an eigenfunction $\phi$ and adjoint eigenfunction $\phi^*$ associated to it through the linear equations \eqref{dKP-LP} and \eqref{dKP-aLP}, is compatible with the discrete KP hierarchy.  The constrained hierarchy obtained by imposing this condition on the discrete KP hierarchy and its linear and adjoint linear system, has solutions $\tau, \phi, \phi^*$ that can be expressed by bi-directional Casorati determinants
\begin{equation}
\tau^{(N)}~\!:=~\! \Big|\Delta_S^{i-1} \Delta^{j-1} f(\ell) \Big|_{i,j=1..N}~\!,\qquad \tau^{(0)} := 1~\!,\label{bidircaso}
\end{equation}
where $\Delta$ is the (forward) difference operator \eqref{Del} in an arbitrary but fixed direction on the lattice and $\Delta_S:= \gamma (S - 1)$. The function $f(\ell)$ should satisfy the dispersion relations $\Delta_{\ell_j} f(\ell) = \Delta_{\ell_i} f(\ell)$ for all possible directions $\ell_i,\ell_j$ on the lattice. 
Solutions $\tau, \phi, \phi^*$ to the constraint are given by:
$$\tau = \tau^{(N)}~\!,\quad \phi=\frac{S\tau^{(N-1)}}{\tau^{(N)}}~\!,\quad \phi^*=\frac{\tau^{(N+1)}}{\tau^{(N)}}~\!,$$
for an arbitrary positive integer $N$.
\end{theorem}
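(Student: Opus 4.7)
\emph{Proof plan.} The proof splits into two essentially independent parts: the compatibility of the constraint \eqref{constraint} with the hierarchy, and the verification that the bi-directional Casorati determinants furnish solutions.

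The compatibility I would dispose of with a tautological reading of \eqref{constraint} as $\Omega(\phi,\phi^*)=\gamma\,S\tau/\tau$. By Theorem \ref{BDT} the right-hand side $\tau\cdot\Omega(\phi,\phi^*)$ is a tau function, whereas $\gamma\,S\tau$ is trivially a tau function since lattice shifts and constant rescalings are precisely the trivial symmetries of Definition \ref{taudef}; thus \eqref{constraint} identifies two representatives of the same equivalence class and cannot be inconsistent with any HM equation or with \eqref{dKP-LP}, \eqref{dKP-aLP}. To see that the constraint is actually preserved under every flow of the hierarchy, I would apply $\Delta_{\ell_j}$ to both sides and, using the defining property $\Delta_{\ell_j}\Omega(\phi,\phi^*)=\phi_{\ell_j}\phi^*$, derive a bilinear identity in direction $\ell_j$ which follows from the HM equation in directions $(S,\ell_j)$ and from \eqref{dKP-LP}/\eqref{dKP-aLP} projected on direction $\ell_j$.

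For the Casorati part, the engine is the dispersion relation $\Delta_{\ell_j}f=\Delta_{\ell_i}f$: the common direction-independent difference $\Delta f$ makes every lattice shift of $\tau^{(N)}$ a controlled column operation in \eqref{bidircaso}, while an $S$-shift is a row operation. I would then verify in turn that (a) $\tau^{(N)}$ satisfies every HM equation, (b) the pair $\phi=S\tau^{(N-1)}/\tau^{(N)}$, $\phi^*=\tau^{(N+1)}/\tau^{(N)}$ satisfies \eqref{dKP-LP} and \eqref{dKP-aLP} relative to $\tau^{(N)}$, and (c) they satisfy the constraint \eqref{constraint}. Each of these, after clearing denominators, becomes a quadratic identity among Casoratis of sizes $N-1$, $N$, $N+1$ in various shifted configurations, and each follows from a Jacobi/Sylvester relation for a suitably bordered $(N+1)\times(N+1)$ matrix of entries $\Delta_S^{i-1}\Delta^{j-1}f$.

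The hard part will be item (c): after clearing denominators the constraint reads
\[
\gamma\,a_j\bigl[(S\tau^{(N)})_{\ell_j}\,\tau^{(N)}-S\tau^{(N)}\,\tau^{(N)}_{\ell_j}\bigr]=(S\tau^{(N-1)})_{\ell_j}\,\tau^{(N+1)},
\]
a bilinear identity mixing three different determinantal sizes together with both an $S$-shift and an $\ell_j$-shift. The bi-directional structure of \eqref{bidircaso} is precisely what makes this tractable: because the row operator is $\Delta_S=\gamma(S-1)$, the factor $\gamma$ emerges from bordering with an extra row (moving between $\tau^{(N)}$ and $\tau^{(N\pm 1)}$), while the uniform column operator $\Delta$ produces the factor $a_j$ on the left-hand side upon bordering with the column encoding the $\ell_j$-shift. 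Expressing all four Casoratis on either side as $N\times N$ minors of one common $(N+1)\times(N+1)$ matrix with two distinguished rows and two distinguished columns, the identity reduces to a single Sylvester/Pl\"ucker determinantal identity; parts (a) and (b) then follow by the same mechanism with simpler borderings.
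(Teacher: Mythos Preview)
Your compatibility paragraph contains a genuine gap. Theorem~\ref{BDT} tells you that $\tau\,\Omega(\phi,\phi^*)$ is \emph{a} tau function, and $\gamma\,S\tau$ is trivially \emph{a} tau function, but nothing forces these to be the same tau function or even to lie in the same equivalence class of Definition~\ref{taudef}; the constraint \eqref{constraint} is precisely the nontrivial demand that they coincide, and this cannot be read off tautologically. The paper in fact takes ``compatible with the hierarchy'' to mean ``admits nontrivial joint solutions with the hierarchy and its linear problems'', and its proof consists entirely of exhibiting such solutions. So your Casorati verification \emph{is} the proof, and the separate compatibility discussion should be discarded rather than repaired.

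On the Casorati side your plan is sound and close to the paper's. Both arguments reduce the differenced constraint
\[
\gamma\,a_j\bigl[\tau^{(N)}(S\tau^{(N)})_{\ell_j}-\tau^{(N)}_{\ell_j}\,S\tau^{(N)}\bigr]=\tau^{(N+1)}(S\tau^{(N-1)})_{\ell_j}
\]
to a Jacobi identity, after rewriting \eqref{bidircaso} in pure shift form $(a_j\gamma)^{N(N-1)/2}\bigl|S^{i-1}T_{\ell_j}^{\,j-1}f\bigr|$. The paper economises in two places where you do not. For (a) and for the adjoint half of (b) it simply cites the known fact that one-directional Casoratis $\bigl|\Delta^{j-1}f^{(i)}\bigr|$ (here with $f^{(i)}=\Delta_S^{i-1}f$) are dKP tau functions and that $\tau^{(N\pm1)}/\tau^{(N)}$ are (adjoint) eigenfunctions. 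For the eigenfunction property of $\phi=S\tau^{(N-1)}/\tau^{(N)}$ it does not attempt a direct determinantal identity at all: it first proves Lemma~\ref{crucial lemma} and its converse (Remark~\ref{fie-phi}), which say that under the constraint, $\varphi=\tau^{(N-1)}/\tau^{(N)}$ being an eigenfunction is equivalent to $\phi=(S\tau^{(N-1)})/\tau^{(N)}$ being one; since the former is the standard Casorati fact just cited, $\phi$ comes for free once (c) is verified. Your route of proving (a), (b) directly via further Jacobi/Sylvester identities would also succeed, but is more work and misses the structural role of Lemma~\ref{crucial lemma}, which the paper reuses later to build the Lax pairs.
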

These solutions can in fact be extended to the case $N=0$ by assuming $\tau^{(-1)}=0$ whenever this makes sense in the actual equations.

\smallskip
The simplest example of such a constraint leads to the discretisation of the NLS equation mentioned in the previous section.
\begin{prop}\label{dNLSprop}
The case $\gamma=\nu, S = T_n$,  in terms of the shift operator $T_n$ in the $n$ direction, yields the following integrable discretisation of the NLS equation:
\begin{equation}
\left\{\begin{matrix}
\mu \phi_m^* - \nu \phi_n^* = \dfrac{(\mu-\nu) \phi^*}{1-\frac{1}{\mu\nu} \phi^*\fie_{mn}}\\[-1.5mm]\\
\mu \fie_n - \nu \fie_m = \dfrac{(\mu-\nu) \fie_{mn}}{1-\frac{1}{\mu\nu} \phi^*\fie_{mn}}~\!.
\end{matrix}\right.\label{dNLS}
\end{equation}
This equation can be cast into the Hirota bilinear form 
\begin{gather}
\mu\nu ( \tau \tau_{mn} - \tau_m \tau_n ) = \tau^* \tau_{mn}' \label{dNLS-BF1}\\
(\mu-\nu) \tau_{mn} \tau^* = \mu \tau_n \tau_m^* - \nu \tau_m \tau_n^*\label{dNLS-BF2}\\
(\mu-\nu) \tau \tau_{mn}' = \mu \tau_m \tau_n' - \nu \tau_n \tau_m'\label{dNLS-BF3}
\end{gather}
by means of the transformation
\begin{equation*}
\phi^* = \frac{\tau^*}{\tau}~\!,\quad \fie =\frac{\tau'}{\tau}~\!.
\end{equation*}
\end{prop}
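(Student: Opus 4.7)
The plan is to establish the proposition in two stages corresponding to its two assertions: first, derive the nonlinear system \eqref{dNLS} from the specialised constraint; second, verify the bilinearisation by the stated transformation.

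For the first stage, I would start by specialising the constraint \eqref{constraint} to $\gamma=\nu$, $S=T_n$, which reads $\nu\tau_n = \tau\,\Omega(\phi,\phi^*)$ and immediately gives the explicit identification $\Omega = \nu\tau_n/\tau$. Applying $T_m$ to this expression yields $\Omega_m = \nu\tau_{mn}/\tau_m$, so that combining with the defining relation $\mu(\Omega_m - \Omega) = \phi_m\phi^*$ of Theorem \ref{Om-def} produces the bilinear identity
\[
\mu\nu(\tau\tau_{mn} - \tau_m\tau_n) \;=\; \phi_m\phi^*\,\tau\tau_m .
\]
The key step is then to recast this into the equivalent form
\[
\frac{\tau_m\tau_n}{\tau\tau_{mn}} \;=\; 1 - \frac{1}{\mu\nu}\phi^*\fie_{mn},
\]
which requires trading the combination $\phi_m\tau_m$ against $\phi_{mn}\tau_{mn}$; this is accomplished by invoking the direct linear equation \eqref{dKP-LP} for $\phi_{mn}$ in conjunction with the $\Omega$-relation in the $n$-direction. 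Once this tau-ratio identity is in place, one substitutes the expression $\tau\tau_{mn}/(\tau_m\tau_n) = 1/(1 - \phi^*\fie_{mn}/(\mu\nu))$ into the adjoint linear system \eqref{dKP-aLP} to recover the first equation of \eqref{dNLS}, and into the direct linear system \eqref{dKP-LP} (with $\fie = \phi$) to recover the second.

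For the second stage, I would feed the substitution $\phi^* = \tau^*/\tau$, $\fie = \tau'/\tau$ into the identities obtained above. The combination $\phi^*\fie_{mn}$ becomes $\tau^*\tau_{mn}'/(\tau\tau_{mn})$, so the tau-ratio identity translates immediately into $\mu\nu(\tau\tau_{mn} - \tau_m\tau_n) = \tau^*\tau_{mn}'$, namely \eqref{dNLS-BF1}. With \eqref{dNLS-BF1} in hand, the denominator appearing in \eqref{dNLS} simplifies to $\tau_m\tau_n/(\tau\tau_{mn})$, and rewriting the LHSs $\mu\phi_m^* - \nu\phi_n^*$ and $\mu\fie_n - \nu\fie_m$ under the substitution gives the bilinear expressions $(\mu\tau_n\tau_m^* - \nu\tau_m\tau_n^*)/(\tau_m\tau_n)$ and $(\mu\tau_m\tau_n' - \nu\tau_n\tau_m')/(\tau_m\tau_n)$; equating these with the corresponding simplified RHSs and clearing $\tau_m\tau_n$ reproduces \eqref{dNLS-BF2} and \eqref{dNLS-BF3} respectively.

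The principal obstacle is located in the first stage, specifically the passage from the factor $\phi_m\phi^*\tau\tau_m$ that emerges naturally from the constraint to the factor $\phi^*\fie_{mn}\tau\tau_{mn}$ that is needed for \eqref{dNLS}. This step is not a consequence of the constraint alone: it reflects the compatibility of the constraint with the KP linear problem and is in fact equivalent, in tau-function form, to the bilinear reduction condition \eqref{dNLS-BF1}. Once this identity is secured, the remainder of the argument is an essentially algebraic exercise in which \eqref{dNLS-BF2} and \eqref{dNLS-BF3} are seen to be nothing more than the adjoint and direct linear systems \eqref{dKP-aLP}, \eqref{dKP-LP} written in terms of $\tau,\tau^*,\tau'$.
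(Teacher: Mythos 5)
Your derivation of the system \eqref{dNLS} and of its bilinear form is essentially the route the paper takes for that part of the statement: the $m$-difference of the constraint gives the tau-ratio identity \eqref{dNLS-diffconst} (equivalently \eqref{dNLS-BF1}), and \eqref{dNLS-BF2}, \eqref{dNLS-BF3} are nothing but the adjoint and direct linear systems \eqref{taustdefeq}, \eqref{taupdefeq} written in terms of $\tau,\tau^*,\tau'$. Two remarks on that part. The step you single out as the principal obstacle is in fact purely definitional: by Lemma \ref{crucial lemma} one has $\phi=\bar\tau/\tau$ with $\bar\tau=T_n\tau'$, so $\phi_m\tau_m=\tau'_{mn}=\fie_{mn}\tau_{mn}$, and the passage from $\phi_m\phi^*\,\tau\tau_m$ to $\phi^*\fie_{mn}\,\tau\tau_{mn}$ requires no appeal to the linear equation for $\phi_{mn}$. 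Also, your parenthesis ``with $\fie=\phi$'' is misleading: $\fie\neq\phi$ (rather $\phi=\fie_n\tau_n/\tau$), and the fact that $\fie$ satisfies the direct linear system associated to $\tau$ is precisely the content of Lemma \ref{crucial lemma}, which is what you should invoke there.

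The genuine gap is the word \emph{integrable}. The paper fixes a precise meaning for this term: the lattice system must arise as the compatibility condition of a linear system carrying a non-trivial spectral parameter. Accordingly, the bulk of the paper's proof of this proposition is the construction of the Lax pair (\ref{dNLS-LP1},\ref{dNLS-LP2}): one introduces the eigenfunction $\psi$ and the potential $\chi$ of Lemma \ref{LPfromHM}, uses \eqref{psiconstraint} (with $\Gamma=(\nu-\kappa)/\nu$) and \eqref{chidefeq} to close a $2\times2$ linear system for $\Psi={}^t(\psi~\chi)$ in the $m$ and $n$ directions, and checks that the compatibility condition $T_n\Psi_m=T_m\Psi_n$ reproduces \eqref{dNLS} independently of the spectral parameter $\kappa$. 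Your proposal establishes only that solutions of the constrained discrete KP hierarchy satisfy \eqref{dNLS}; it does not exhibit \eqref{dNLS} as the compatibility condition of a spectral linear problem, so the integrability claim --- the substantive assertion of the proposition --- is left unproved.
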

This discrete system was discovered 30 years ago \cite{DateIV} but has almost been completely forgotten since. It was recently rediscovered in \cite{Hattori}.

A first continuum limit, based on \eqref{contlim}, where $|\mu|\rightarrow\infty$, yields the differential-difference system \cite{Hattori}
\begin{equation}
\left\{\begin{array}{l}
\nu(\phi_n^* - \phi^*) = \phi_{x_1}^* - \frac{1}{\nu} (\phi^*)^2 \fie_n\\[-2mm]\\
\nu(\fie_n - \fie) = (\fie_n)_{x_1} + \frac{1}{\nu} \phi^* (\fie_n)^2
\end{array}\right.\label{sdNLS}
\end{equation}
which is a time-discretisation of the NLS equation. The NLS equation in its so-called complexified form, is obtained at the next continuum limit ($|\nu|\rightarrow\infty$)
\begin{equation}
\left\{
\begin{array}{l}
\phi_{x_2}^* = -( \phi_{x_1x_1}^* + 2 (\phi^*)^2\fie )\cr
\fie_{x_2} = \fie_{x_1x_1} +2 \phi^* \fie^2\quad~\!.
\end{array}
\right.\label{NLS}
\end{equation}
It will be shown in section \ref{extensions}, that the discrete system \eqref{dNLS} can be used to obtain a compact form of the entire continuous NLS hierarchy. 

The continuum limits of the tau functions \eqref{bidircaso} in this case are of course nothing but the well-known bi-directional Wronski determinants 
\begin{equation}
\tau = \tau^{(N)}~\!:=~\! \Big| \big(\dfrac{\partial}{\partial_{x_1}}\big)^{i+j-2} f \Big|_{i,j=1..N}~\!,\quad \tau'=\tau^{(N-1)}~\!,\quad \tau^*=\tau^{(N+1)}~\!,\label{bidirWronski}
\end{equation}
that solve the NLS equation in its bilinear form:
\begin{equation*}
\left\{\begin{array}{l}
\dfrac{1}{2} D_{x_1}^2 \tau\cdot\tau = \tau' \tau^*\\[-2mm]\\
\big( D_{x_2} - D_{x_1}^2 \big) \tau'\cdot\tau = 0\\[-3mm]\\
\big( D_{x_2} + D_{x_1}^2 \big) \tau^*\cdot\tau = 0~\!,
\end{array}\right.
\end{equation*}
which is obtained from (\ref{dNLS-BF1}--\ref{dNLS-BF3}), at the continuum limit. The function $f(x_1, x_2, \hdots)$ that defines the determinant \eqref{bidirWronski}, is required to satisfy the dispersion relations $f_{x_k} = (-1)^{k+1} f_{k x_1}$ (for arbitrary $k=1, 2, \hdots$).

Moreover, the tau functions that satisfy the bilinear system \eqref{dNLS-BF1}--\eqref{dNLS-BF3} also yield solutions to a discretisation of the Broer-Kaup equation \cite{BK}:
\begin{prop}\label{dBKprop}
The system
\begin{equation}
\left\{\begin{array}{l}
H_{mn} = \dfrac{H_n U_n (\mu - \nu H_m)}{U_m (\mu - \nu H_n)}\\[-1.5mm]\\
U=\dfrac{\mu\nu(H_n - H_m)}{(\mu - \nu H_n)(\mu - \nu H_m)}+\dfrac{\mu U_m}{(\mu - \nu H_m)}-\dfrac{\nu U_n H_n}{(\mu - \nu H_n)}
\end{array}\right.\label{dBK}
\end{equation}
is an integrable discretisation of the Broer-Kaup equation, 
\begin{equation}
\left\{\begin{array}{l}
h_{x_2} = \big(h_{x_1} + 2 u + h^2 \big)_{x_1}\\[-3mm]\\
u_{x_2} = \big(2 u h - u_{x_1}\big)_{x_1}~\!.\label{BK}
\end{array}\right.
\end{equation}
It is related to the tau functions $\tau$ and $\tau'$ of the dNLS equation \eqref{dNLS} through the dependent variable transformations:
\begin{equation*}
U= \frac{\tau~\! \tau_{mn}}{\tau_m\tau_n}~\!,\quad H= \dfrac{\tau_n\tau_m'}{\tau_m\tau_n'}~\!,
\end{equation*}
and $u=(\log \tau)_{x_1 x_1}$ and $h = (\log \tau'/\tau)_{x_1}$ for the continuous system. The discrete Broer-Kaup system \eqref{dBK}  corresponds to the Hirota bilinear form:
\begin{gather*}
(\mu-\nu) \tau \tau_{mn}' = \mu \tau_m \tau_n' - \nu \tau_n \tau_m'\label{dBK-BF1}\\[-4mm]\\
(\mu-\nu) \tau \tau_{mnn}' = \mu \tau_m \tau_{nn}' - \nu \tau_n \tau_{mn}'\label{dBK-BF2}~\!,
\end{gather*}
which has solutions $\tau= \tau^{(N)}$ and $\tau'= \tau^{(N-1)}$ with $\tau^{(N)}$ as in \eqref{bidircaso}.
\end{prop}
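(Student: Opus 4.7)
The plan is to proceed in four steps: (i) verify that the Casoratian ansatz \eqref{bidircaso} solves the bilinear system \eqref{dBK-BF1}--\eqref{dBK-BF2}, (ii) derive the nonlinear form \eqref{dBK} from the bilinear one via the stated substitution, (iii) check the continuum limit to \eqref{BK}, and (iv) exhibit a Lax pair. For (i), I would first observe that \eqref{dBK-BF1} is literally the dNLS bilinear relation \eqref{dNLS-BF3}, so it is already satisfied by the bi-directional Casoratians as a direct consequence of Theorem \ref{maintheorem} applied with $S=T_n$. Equation \eqref{dBK-BF2} is a Pl\"ucker--Jacobi identity of the very same type as \eqref{dBK-BF1}, but involving $n$-shifted minors of the same Casoratian; I would establish it by the standard Pl\"ucker/Jacobi manipulation on bordered minors of the determinant \eqref{bidircaso}, in exactly the spirit that underlies \eqref{HMs} itself (cf.\ the algebraic verification following Theorem \ref{BDT}).

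For (ii), I would set $U=\tau\tau_{mn}/(\tau_m\tau_n)$ and $H=\tau_n\tau_m'/(\tau_m\tau_n')$, and divide \eqref{dBK-BF1} by $\tau_m\tau_n'$ to obtain the key ratio $(\mu-\nu)\,\tau\tau_{mn}'/(\tau_m\tau_n')=\mu-\nu H$. Dividing instead by $\tau_n\tau_m'$ gives the complementary identity $(\mu-\nu)\,\tau\tau_{mn}'/(\tau_n\tau_m')=\mu/H-\nu$, and applying the same procedure to \eqref{dBK-BF2} yields analogous identities at the $n$-shifted level. Comparing the two expressions for $\tau\tau_{mn}'/(\tau_m\tau_n')$ and its $n$-shift produces the multiplicative relation for $H_{mn}$, while eliminating the cross-ratios in favour of $U$, $U_m$, $U_n$ (using $U=\tau\tau_{mn}/(\tau_m\tau_n)$ together with \eqref{HMs}) produces the additive formula for $U$ in \eqref{dBK}. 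The step I expect to cost the most effort is precisely this derivation of the additive expression for $U$: while the multiplicative $H_{mn}$ relation falls out almost at once from ratios of the bilinear equations, matching the additive formula in \eqref{dBK} exactly requires combining both bilinear equations at several shifts and choosing cancellations judiciously.

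For (iii), I would apply the Miwa parametrisation \eqref{contlim} and expand the bilinear system in powers of $1/\mu$ and $1/\nu$ as $|\mu|,|\nu|\to\infty$; the lowest non-trivial order gives a Hirota bilinear form for Broer--Kaup, which translates via $u=(\log\tau)_{x_1x_1}$ and $h=(\log\tau'/\tau)_{x_1}$ to \eqref{BK}, with the Casoratians \eqref{bidircaso} passing to bi-directional Wronskians in $f(x_1,x_2,\dots)$. For (iv), integrability in the sense adopted in the paper is inherited from the constrained hierarchy: the bilinear system \eqref{dBK-BF1}--\eqref{dBK-BF2} sits inside the one underlying \eqref{dNLS}, so the Lax pair is provided by the linear and adjoint problems \eqref{dKP-LP}--\eqref{dKP-aLP} subject to the constraint \eqref{constraint} with $S=T_n$ and $\gamma=\nu$, as in Theorem \ref{maintheorem}.
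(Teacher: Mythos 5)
Your outline diverges from the paper's proof at precisely the point that carries the content of the proposition, and there it leaves a real gap. The paper's proof of Proposition \ref{dBKprop} consists almost entirely of constructing a Lax pair for \eqref{dBK}: one applies the gauge transformation $\chi\mapsto\widetilde\chi=\fie_n\chi$ (equivalently $\widetilde\chi=\frac{1}{\kappa}\phi^*_{k'}(S\fie)\psi$) to the dNLS Lax pair (\ref{dNLS-LP1},\ref{dNLS-LP2}); this eliminates $\tau^*$ because, after the gauge, $\phi^*$ enters only through combinations such as $\phi^*\fie_{mn}$ and $\phi^*\fie_{nn}$, which the constraint relation \eqref{dNLS-diffconst} converts into the variable $U=\tau\tau_{mn}/(\tau_m\tau_n)$. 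The outcome is a two-dimensional linear system (\ref{dBK-LP1},\ref{dBK-LP2}) written purely in $U$, $H$ and the spectral parameter $\kappa$, whose compatibility condition is exactly \eqref{dBK} -- and that is what ``integrable'' means in this paper. Your step (iv) replaces this by the assertion that the Lax pair is ``inherited'' from \eqref{dKP-LP}--\eqref{dKP-aLP} under the constraint. That does not suffice: those linear problems contain no spectral parameter, live on the full lattice, and their compatibility is the discrete KP hierarchy, not the closed system \eqref{dBK} in the variables $U$ and $H$. Producing a genuine $2$D linear problem with spectral parameter whose compatibility is precisely \eqref{dBK} is the whole point of the proposition, and the gauge transformation is the missing idea.

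A secondary issue sits in your step (ii). The variable $U$ contains $\tau_{mn}$, which occurs in neither of the two bilinear equations of the proposition (they are \eqref{taupdefeq} and \eqref{taubdefeq} with $\bar\tau=S\tau'=\tau_n'$, already guaranteed by Lemma \ref{crucial lemma} and Remark \ref{fie-phi}, so no new Pl\"ucker identity is needed for your step (i)). Your ratio manipulations do deliver the multiplicative relation: dividing the $m$- and $n$-shifts of the first bilinear equation and inserting the definitions of $U_m$, $U_n$ telescopes exactly to the first line of \eqref{dBK}. But the additive relation for $U$ is not reachable by ratios of these two equations alone; after clearing denominators it is a quintic identity in the tau functions, and the link between $\tau_{mn}$ and the remaining data is supplied by the constraint, i.e.\ by \eqref{dNLS-BF1}, which reintroduces $\tau^*$. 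The paper never performs the bilinear-to-nonlinear derivation you propose: it obtains \eqref{dBK} directly as the compatibility condition of the gauge-transformed Lax pair and only afterwards records the bilinear equations satisfied by $\tau$ and $\tau'$. You have identified the hard step correctly but deferred it without a mechanism for carrying it out. Finally, for step (iii) the continuum limit requires the specific ansatz $U=1+u/(\mu\nu)$ and $H=1+h(\nu-\mu)/(\mu\nu)$, which should be stated.
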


An integrable discretisation of the Yajima-Oikawa system \cite{YO} is obtained from a second order shift operator $S$:
\begin{prop}\label{dYOprop}
The case $\gamma=-\mu\nu, S = T_mT_n$, in terms of shift operators $T_m$ and $T_n$ in the $m$ and $n$ directions, yields an integrable discretisation of the Yajima-Oikawa system if one imposes the restriction $\nu=-\mu$ on the lattice parameters:
\begin{equation}
\left\{\begin{array}{l}
2 \mu^3 (U_{m'n'} - U) = \fie_m \phi_{n'}^*-\fie_n\phi_{m'}^*\\[-2mm]\\
\phi_m^* + \phi_n^* = 2 U \phi^*\\[-1.5mm]\\
\fie_m+\fie_n = 2 U \fie_{mn}
\end{array}\right.~\!.\label{dYO}
\end{equation}
This equation can be cast into the Hirota bilinear form 
\begin{gather}
\mu^3( \tau \tau_{mmn} - \tau_m \tau_{mn} ) = \tau^* \tau_{mmn}' \label{dYO-BF1}\\
2 \tau_{mn} \tau^* = \tau_n \tau_m^* + \tau_m \tau_n^*\label{dYO-BF2}\\
2 \tau \tau_{mn}' = \tau_m \tau_n' + \tau_n \tau_m'\label{dYO-BF3}
\end{gather}
by means of the transformation
\begin{equation*}
U= \frac{\tau~\! \tau_{mn}}{\tau_m\tau_n}~\!,\quad\phi^* = \frac{\tau^*}{\tau}~\!,\quad \fie =\frac{\tau'}{\tau}~\!.
\end{equation*}
\end{prop}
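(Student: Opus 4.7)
By specializing Theorem~\ref{maintheorem} to $\gamma=-\mu\nu$ and $S=T_mT_n$, the constraint \eqref{constraint} reads $-\mu\nu\,\tau_{mn}=\tau\,\Omega(\fie,\phi^*)$, i.e.\ $\Omega(\fie,\phi^*)=-\mu\nu\,\tau_{mn}/\tau$. Compatibility of this constraint with the discrete KP hierarchy, together with the existence of Casorati-determinant solutions (identifying $\tau=\tau^{(N)}$, $\tau'=S\tau^{(N-1)}$, $\tau^*=\tau^{(N+1)}$, so that $\fie=\tau'/\tau$ and $\phi^*=\tau^*/\tau$), is inherited verbatim from Theorem~\ref{maintheorem}. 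What remains is to impose the reduction $\nu=-\mu$, to turn the constraint plus the linear problems into the bilinear system \eqref{dYO-BF1}--\eqref{dYO-BF3}, to extract \eqref{dYO} from that bilinear form, and to verify that the continuum limit yields the Yajima-Oikawa system.

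The first step is to substitute $\fie=\tau'/\tau$ into the direct linear equation of \eqref{dKP-LP} relating $\psi_{mn}$ to $\psi_m,\psi_n$, clear denominators, and set $\nu=-\mu$; the combination $\mu\psi_n-\nu\psi_m$ collapses to $\mu(\psi_m+\psi_n)$ and the result is precisely \eqref{dYO-BF3}. The same procedure applied to the corresponding equation in the adjoint system \eqref{dKP-aLP} produces \eqref{dYO-BF2}. For \eqref{dYO-BF1} I would use the defining relation $\Delta_m\Omega(\fie,\phi^*)=\fie_m\phi^*$ of the squared eigenfunction potential, insert $\Omega=\mu^2\tau_{mn}/\tau$ (since $-\mu\nu=\mu^2$ under the reduction), together with the expressions for $\fie$ and $\phi^*$, and clear denominators; this yields
\begin{equation*}
\mu^3(\tau\tau_{mmn}-\tau_m\tau_{mn})=\tau_m'\,\tau^*.
\end{equation*}
An appropriately shifted use of \eqref{dYO-BF3} then converts the factor $\tau_m'$ on the right-hand side into $\tau'_{mmn}$, giving \eqref{dYO-BF1}.

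From the bilinear form, \eqref{dYO} follows by a short algebraic manipulation. Dividing \eqref{dYO-BF2} and \eqref{dYO-BF3} by $\tau_m\tau_n$ reads off $2U\phi^*=\phi_m^*+\phi_n^*$ and $2U\fie_{mn}=\fie_m+\fie_n$, i.e.\ the last two equations of \eqref{dYO}. The first equation of \eqref{dYO} is obtained by combining \eqref{dYO-BF1} with its backward-shifted image under $T_{m'}T_{n'}$ and using \eqref{dYO-BF2}--\eqref{dYO-BF3} to regroup the mixed products into the expressions $\mu^3(U_{m'n'}-U)$ and $\fie_m\phi_{n'}^*-\fie_n\phi_{m'}^*$. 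The identification as a discretisation of the Yajima-Oikawa system is then checked by taking the Miwa continuum limit \eqref{contlim} jointly in the $m$ and $n$ directions (which is forced by the reduction $\nu=-\mu$, so only one complex time emerges at leading order), applied to \eqref{dYO-BF1}--\eqref{dYO-BF3}.

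The main obstacle I anticipate is the derivation of \eqref{dYO-BF1}: the direct calculation from $\Delta_m\Omega=\fie_m\phi^*$ terminates at $\mu^3(\tau\tau_{mmn}-\tau_m\tau_{mn})=\tau_m'\tau^*$, and turning $\tau_m'$ into $\tau'_{mmn}$ on the right-hand side really does require the linear problem \eqref{dYO-BF3}. It is therefore exactly the reduction $\nu=-\mu$ that decouples the bilinear equations and produces the Yajima-Oikawa bilinear form; the conversion of the bilinear to the nonlinear system and the continuum limit are, by comparison, routine.
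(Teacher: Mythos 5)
There are two genuine problems with your proposal. The first is a conflation of $\fie=\tau'/\tau$ with the eigenfunction $\phi=(S\tau')/\tau$ that actually appears in the constraint \eqref{constraint} — precisely the distinction that Lemma \ref{crucial lemma} and Remark \ref{fie-phi} are set up to handle (your plan already contains the seed of this when you write $\tau'=S\tau^{(N-1)}$ instead of $\tau'=\tau^{(N-1)}$). The squared eigenfunction potential in the constraint is $\Omega(\phi,\phi^*)$ with $\phi=\tau'_{mn}/\tau$, so its defining relation gives $\Delta_m\Omega=\phi_m\phi^*=\tau'_{mmn}\tau^*/(\tau\tau_m)$, and the $m$-difference of $\mu^2\tau_{mn}/\tau=\Omega$ yields \eqref{dYO-BF1} \emph{directly}, with $\tau'_{mmn}$ on the right-hand side and no further conversion needed. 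Your version, using $\Delta_m\Omega(\fie,\phi^*)=\fie_m\phi^*$, lands on $\mu^3(\tau\tau_{mmn}-\tau_m\tau_{mn})=\tau'_m\tau^*$, and the proposed repair cannot work: \eqref{dYO-BF3} relates $\tau\tau'_{mn}$ to $\tau_m\tau'_n+\tau_n\tau'_m$ and gives you no way to replace the isolated factor $\tau'_m$ by $\tau'_{mmn}$; if both your equation and \eqref{dYO-BF1} held simultaneously one would be forced to conclude $\tau'_m=\tau'_{mmn}$, which is false for the bi-directional Casorati solutions. So the step you yourself flag as the main obstacle is indeed where the argument breaks, but the cure is to use the correct eigenfunction, not to invoke \eqref{dYO-BF3}.

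The second problem is that you never construct a Lax pair, and that is what the paper means by ``integrable discretisation'': a system obtained as the compatibility condition of linear equations with a non-trivial spectral parameter (see the discussion following \eqref{KP-BF}). The bulk of the paper's proof is devoted to exactly this: specialising Lemma \ref{LPfromHM} to $S=T_mT_n$, $\nu=-\mu$ (so that $\Gamma=(\mu^2-\kappa^2)/\mu^2$ and \eqref{psiconstraint} becomes $\psi_{mn}=(\mu^2\psi-\fie_{mn}\chi)/(\mu^2-\kappa^2)$), introducing the auxiliary function $\psi'=\tfrac{\mu^2-\kappa^2}{2\mu}(\psi_m-\psi_n)$, and assembling the $3\times 3$ linear system (\ref{dYO-LP1},\ref{dYO-LP2}) whose compatibility condition, independently of $\kappa$, is \eqref{dYO}. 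Your argument establishes consistency with the discrete KP hierarchy and the existence of determinant solutions (inherited from Theorem \ref{maintheorem}), and correctly reads off \eqref{dYO-BF2}, \eqref{dYO-BF3} and the last two equations of \eqref{dYO}, but without the Lax pair the central claim of integrability — and, incidentally, the paper's actual route to the first equation of \eqref{dYO} — is missing.
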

At the continuum limit, system \eqref{dYO} indeed yields the Yajima-Oikawa system:
\begin{equation}
\left\{\begin{array}{l}
u_{x_2} = (\phi^*\fie)_{x_1}\\[-2mm]\\
\phi_{x_2}^* = -( \phi_{x_1x_1}^* + 2 u \phi^* )\\[-2mm]\\
\fie_{x_2} = \fie_{x_1x_1} +2 u \fie
\end{array}\right.\label{YO}
\end{equation}
where $u=(\log \tau)_{x_1 x_1}$. In this case, the continuum limits of the tau functions \eqref{bidircaso} are of course again bi-directional Wronski determinants, 
\begin{equation*}
\tau = \tau^{(N)}~\!:=~\! \Big| \dfrac{\partial^{i+j-2}}{\partial{{x_2}^{i-1}}\partial{{x_1}^{j-1}}} f(\ell) \Big|_{i,j=1..N}~\!,\quad \tau'=\tau^{(N-1)}~\!,\quad \tau^*=\tau^{(N+1)}~\!,
\end{equation*}
for $f(x_1, x_2, \hdots)$ subject to the dispersion relations $f_{x_k} = (-1)^{k+1} f_{k x_1}$ ($k=1, 2, \hdots$). These tau functions solve the system of Hirota bilinear equations
\begin{equation*}
\left\{\begin{array}{l}
\dfrac{1}{2} D_{x_1}D_{x_2} \tau\cdot\tau = \tau' \tau^*\\[-2mm]\\
\big( D_{x_2} - D_{x_1}^2 \big) \tau'\cdot\tau = 0\\[-3mm]\\
\big( D_{x_2} + D_{x_1}^2 \big) \tau^*\cdot\tau = 0~\!,
\end{array}\right.
\end{equation*}
which is obtained as the continuum limit of (\ref{dYO-BF1}--\ref{dYO-BF3}).

\section{Solutions, Lax pairs and continuum limits}\label{proofs}
In this section we shall prove the theorem and propositions presented in section \ref{mainresults}. Let $\phi$ be an eigenfunction for a given discrete KP tau function $\tau$. In other words, $\phi$ satisfies the relation
\begin{equation}
\phi_{lk} = \frac{1}{a_l-a_k}\frac{\tau_l\tau_k}{\tau\tau_{lk}} \big(a_l\phi_k - a_k\phi_l\big)\label{phidefeq}~\!,
\end{equation}
for all possible choices of two (different) directions $l$ and $k$ on the discrete KP lattice. Similarly, let $\phi*$ be an adjoint eigenfunction for that same tau function:
\begin{equation}
\phi^* = \frac{1}{a_l-a_k}\frac{\tau_l\tau_k}{\tau\tau_{lk}} \big(a_l\phi_l^* - a_k\phi_k^*\big)\label{phistdefeq}~\!,
\end{equation}
for all choices of different lattice directions $l$ and $k$. These functions will be parametrized in terms of $\tau$ and new functions $\bar\tau$ for ($\phi$) and $\tau^*$ for ($\phi^*$) :
\begin{equation}
\phi:=\frac{\bar\tau}{\tau}~\!,\quad\phi^*:= \frac{\tau^*}{\tau}~\!.\label{phietcdef}
\end{equation}
The linear and adjoint linear equations \eqref{phidefeq} and \eqref{phistdefeq} then give rise to Hirota bilinear relations for all different pairs of lattice directions for these new functions. In particular, in the $m$ and $n$ directions we find
\begin{equation}
\label{taubdefeq}
(\mu-\nu) \tau {\bar\tau}_{mn}  = \mu \tau_m{\bar\tau}_n - \nu \tau_n {\bar\tau}_m~\!,
\end{equation}
for $\tau$ and $\bar\tau$, and
\begin{equation}
\label{taustdefeq}
(\mu-\nu) \tau_{mn} \tau^* = \mu \tau_n\tau_m^* - \nu \tau_m \tau_n^*~\!,
\end{equation}
for $\tau$ and $\tau^*$.

First, we shall prove the following crucial Lemma:
\begin{lemma}\label{crucial lemma}
If $\tau$, $\phi$ and $\phi^*$ satisfy the constraint
$$\gamma~\!S\tau ~\!=~\! \tau~\!\Omega(\phi,\phi^*)~\!,$$
for some shift $S$ on the discrete KP lattice, and for some $\gamma\in\mathbb{C}^\times$, then 
\begin{equation}
\fie := \dfrac{\tau'}{\tau}~\!,\label{fiedef}
\end{equation}
where $\tau' := S^{-1} \bar\tau$, also satisfies the defining equation \eqref{phidefeq} for an eigenfunction.
\end{lemma}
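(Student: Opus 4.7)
The plan is to unfold the eigenfunction requirement into a bilinear identity on $\tau$ and $\tau'$, and then to reduce that identity using the constraint together with the eigenfunction relations already available for $\phi$ and $\phi^*$. Setting $\tau' := S^{-1}\bar\tau$, showing that $\fie = \tau'/\tau$ satisfies \eqref{phidefeq} amounts to verifying, for each pair of distinct lattice directions $l,k$ with parameters $a_l=\mu$ and $a_k=\nu$, the bilinear identity
\[(\mu-\nu)\,\tau\,\tau'_{lk} \;=\; \mu\,\tau_l\,\tau'_k \;-\; \nu\,\tau_k\,\tau'_l\,.\]
Since $S$ is a lattice shift and therefore commutes with the elementary shifts $T_l$ and $T_k$, I would apply $S$ throughout, turning this into the equivalent identity
\[(\mu-\nu)\,(S\tau)\,\bar\tau_{lk} \;=\; \mu\,(S\tau_l)\,\bar\tau_k \;-\; \nu\,(S\tau_k)\,\bar\tau_l\,.\]

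Next, I would invoke the constraint \eqref{constraint} pointwise, i.e.\ $\gamma S\tau = \tau\,\Omega$, $\gamma S\tau_l = \tau_l\,\Omega_l$, $\gamma S\tau_k = \tau_k\,\Omega_k$ (with $\Omega\equiv\Omega(\phi,\phi^*)$), so as to eliminate the shifts under $S$ in favour of $\Omega$ and its neighbouring values. After clearing the common factor $\gamma$, the identity to be proved reads
\[(\mu-\nu)\,\tau\,\Omega\,\bar\tau_{lk} \;=\; \mu\,\tau_l\,\Omega_l\,\bar\tau_k \;-\; \nu\,\tau_k\,\Omega_k\,\bar\tau_l\,.\]
Because $\phi = \bar\tau/\tau$ is an eigenfunction, the pair $(\tau,\bar\tau)$ satisfies \eqref{taubdefeq} in every pair of directions, which fixes $\bar\tau_{lk} = (\mu\tau_l\bar\tau_k - \nu\tau_k\bar\tau_l)/((\mu-\nu)\tau)$. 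Substituting this expression and collecting terms, the identity collapses to
\[\mu\,\tau_l\,\bar\tau_k\,(\Omega-\Omega_l) \;=\; \nu\,\tau_k\,\bar\tau_l\,(\Omega-\Omega_k)\,.\]

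To finish, I would evaluate both sides using the defining relations $\Delta_{\ell_l}\Omega = \phi_l\phi^*$ and $\Delta_{\ell_k}\Omega = \phi_k\phi^*$ for the squared eigenfunction potential. Combined with \eqref{Del} and the parametrisation \eqref{phietcdef}, these yield $\Omega-\Omega_l = -\bar\tau_l\tau^*/(\mu\,\tau_l\tau)$ and $\Omega-\Omega_k = -\bar\tau_k\tau^*/(\nu\,\tau_k\tau)$. Substituting, both sides reduce to the common expression $-\bar\tau_l\bar\tau_k\tau^*/\tau$, and the verification is complete. The only mildly delicate point -- and the place where the argument could conceivably go wrong -- is the bookkeeping in the first reduction: one has to apply the constraint not only to $\tau$ itself but also at the $l$- and $k$-shifts, and the miracle that makes everything work is precisely that the defining difference relation for $\Omega$ matches the residual factors produced by the bilinear equation for $\bar\tau$. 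Notably, nothing in the argument depends on the explicit form of $S$ beyond its being a lattice shift, so the conclusion holds uniformly for every pair of directions $l,k$.
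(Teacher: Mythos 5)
Your proof is correct and uses exactly the same ingredients as the paper's: the constraint evaluated at neighbouring lattice sites (equivalently the difference relations $\Delta\Omega=\phi_{\cdot}\phi^*$), the bilinear relation \eqref{taubdefeq} for $(\tau,\bar\tau)$, and the parametrisation \eqref{phietcdef}. The only difference is organisational — the paper derives \eqref{taupdefeq} forwards by cross-multiplying the two difference relations to eliminate $\tau^*$, whereas you verify the target identity backwards and let both sides reduce to the common expression $-\bar\tau_l\bar\tau_k\tau^*/\tau$ — so this is essentially the paper's argument.
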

\begin{proof}
We give the proof for the $m, n$ directions, which is representative for all possible pairs of directions.
On account of the definition of the potential $\Omega$ in Theorem (\ref{Om-def}), taking (forward) differences of the constraint in the $m$ and $n$ directions, one obtains
\begin{gather*}
\mu \gamma~\! \left( \dfrac{S\tau_m}{\tau_m} - \dfrac{S\tau}{\tau} \right)~\!=~\! \phi^* \phi_m ~\!=~\! \dfrac{\tau^* (S\tau_m')}{\tau \tau_m}\\
\nu \gamma~\! \left( \dfrac{S\tau_n}{\tau_n} - \dfrac{S\tau}{\tau} \right)~\!=~\! \phi^* \phi_n ~\!=~\! \dfrac{\tau^* (S\tau_n')}{\tau \tau_n}~\!,
\end{gather*}
from which it follows that
\begin{equation*}
\tau \big[ \mu S(\tau_m\tau_n') - \nu S(\tau_n\tau_m') \big] = (S\tau) \big[ \mu \tau_m \bar\tau_n - \nu \tau_n \bar\tau_m \big]~\!.
\end{equation*}
Because of \eqref{taubdefeq}, i.e. because of the fact that $\phi$ is a discrete KP eigenfunction, the r.h.s. in this last expression is nothing but $(\mu-\nu) \tau\bar\tau_{mn}$ or $(\mu-\nu) \tau (S\tau_{mn}')$, and one finds that
\begin{equation}
\label{taupdefeq}
(\mu-\nu) \tau \tau_{mn}'  = \mu \tau_m \tau_n' - \nu \tau_n  \tau_m'~\!.
\end{equation}
Hence, $\fie$ -- as defined by \eqref{fiedef} -- satisfies
\begin{equation}
\fie_{mn} = \frac{1}{\mu-\nu}\frac{\tau_m\tau_n}{\tau\tau_{mn}} \big(\mu\fie_n - \nu\fie_m\big)~\!.\label{fiedefeq}
\end{equation}
\end{proof}

\begin{remark}\label{fie-phi}
Note that this proof can also be read backwards: i.e., under the constraint, one also has that the fact that $\fie = \tau'/\tau$ satisfies the discrete KP linear system for the tau function $\tau$, necessarily implies that $\phi = (S\tau')/\tau$ satisfies these equations as well.
\end{remark}
\begin{remark}
In fact, a similar property holds for the function $\tau^*$, which can be seen to satisfy: 
\begin{equation*}
(\mu-\nu) (S\tau_{mn}) \tau^* = \mu (S\tau_n)\tau_m^* - \nu (S\tau_m) \tau_n^*~\!.
\end{equation*}
Hence, one finds that the ratio $\tau^*/(S\tau)$ satisfies the adjoint linear equations \eqref{phistdefeq}, not for the tau function $\tau$ itself, but for the shifted tau function $S\tau$. In fact, this observation amounts to an explicit verification of the commutativity of the following Bianchi diagram for (adjoint) Darboux transformations \cite{RW-IP}

\resizebox{.225cm}{!}{\mbox{
\begin{picture}(1,2)(-100,140)
\put(0,100){\makebox(85,-1)[t]{$\bigF{\bar\tau}$}}
\put(50,105){\vector(2,1){50}}
\put(89,140){\makebox(-43,-6)[t]{$\bigF{1/S\fie}$}}
\put(50,85){\vector(2,-1){50}}
\put(89,50){\makebox(-16,20)[r]{$\bigF{1/\phi}$}}
\put(135,145){\makebox(-46,-23)[]{$\bigF{\widehat\tau}$}}
\put(135,45){\makebox(-46,23)[]{$\bigF{\tau}$}}
\put(122,130){\vector(2,-1){50}}
\put(174,140){\makebox(-34,-8)[t]{$\bigF{\tau^*/S\tau}$}}
\put(122,60){\vector(2,1){50}}
\put(224,45){\makebox(-146,37){$\bigF{\phi^*}$}}
\put(222,99){\makebox(-80,3)[t]{$\bigF{\tau^*}$}}
\end{picture}
}}\hfill\break\vskip2.4cm\noindent
where $\widehat\tau = \tau~\!\Omega(\phi,\phi^*) \sim S\tau$ under the constraint.
\end{remark}

We now proceed with the proof of Theorem \ref{maintheorem}, for which it suffices to show that the constraint shares (a nontrivial) part of the solution space of the discrete KP hierarchy and its associated linear (and adjoint linear) system.
\begin{proof}[{\bf Proof of Theorem \ref{maintheorem}}]
It is well-known \cite{Ohta-dKP,Nimmo-DT} that the discrete KP hierarchy has tau functions in the form of Casorati determinants
\begin{equation*}
\tau^{(N)} ~\!=~\! \Big| \Delta^{j-1} f^{(i)}(\ell) \Big|_{i,j=1..N}~\!,
\end{equation*}
defined in terms of functions $f^{(i)}(\ell)$ that are required to satisfy the dispersion relations $\Delta_{\ell_j} f^{(i)}(\ell) = \Delta_{\ell_k} f^{(i)}(\ell)$ for all possible directions $\ell_j, \ell_k$ on the lattice. The difference operator $\Delta$ acts in a fixed, but arbitrary lattice direction, and the value of the resulting determinant is obviously independent of the particular choice of lattice direction. Furthermore, the ratios ${\tau^{(N-1)}}/{\tau^{(N)}}$ and ${\tau^{(N+1)}}/{\tau^{(N)}}$ are known to satisfy, respectively, the linear equations \eqref{phidefeq} and adjoint linear equations \eqref{phistdefeq} associated to $\tau=\tau^{(N)}$. In fact, the functions $f^{(i)}(\ell)$ can be thought of as solutions to the adjoint linear system \eqref{phistdefeq}, for a vacuum tau function $\tau(\ell)=1$, from which the Casorati determinants are then constructed through iterated (adjoint) Darboux transformations \cite{Nimmo-DT,RW-JMP1}. 

It now suffices to find a restriction on the functions $f^{(i)}(\ell)$, besides the dispersion relations, such that the resulting Casorati determinants can simultaneously satisfy the discrete KP hierarchy and the constraint. In fact, Lemma \ref{crucial lemma} offers an important clue. In the simplest non-trivial case, i.e. $N=2$, one has that the pair of tau functions 
\begin{equation*}
(\tau, \tau') ~\!=~\! \big( \begin{vmatrix} f^{(1)} & \Delta f^{(1)}\\  f^{(2)} & \Delta f^{(2)}\end{vmatrix} , f^{(1)}\big)~\!,
\end{equation*} 
defines a solution to the linear system \eqref{phidefeq} for the discrete KP hierarchy. However, as pointed out in Remark \ref{fie-phi}, the constraint implies that the pair 
\begin{equation*}
(\tau, S\tau') ~\!=~\! \big( \begin{vmatrix} f^{(1)} & \Delta f^{(1)}\\  f^{(2)} & \Delta f^{(2)}\end{vmatrix} , Sf^{(1)}\big)~\!,
\end{equation*} 
should also define an eigenfunction. The only conceivable way in which this can happen in a generic fashion, without any conflict with the structure of the solutions for the discrete KP hierarchy, is when $f^{(2)}$ is in fact nothing but  $Sf^{(1)}$, which is the obvious choice that restores the symmetry of this otherwise rather peculiar situation. Hence the ansatz :
$$\tau = \tau^{(N)}~\!,\quad \fie=\frac{\tau^{(N-1)}}{\tau^{(N)}}~\!,\quad \phi^*=\frac{\tau^{(N+1)}}{\tau^{(N)}}~\!,$$
for the bi-directional Casorati determinants
\begin{equation*}
\tau^{(N)}~\!:=~\! \Big|\Delta_S^{i-1} \Delta^{j-1} f(\ell) \Big|_{i,j=1..N}~\!.
\end{equation*}
Note that, if the constraint is indeed verified for such Casorati determinants, Remark \ref{fie-phi} implies that $\phi = (S\tau^{(N-1)})/\tau^{(N)}$ is an eigenfunction for $\tau^{(N)}$. 

All that remains then  is the verification that the constraint is indeed satisfied. Because of the definition of the potential $\Omega$ (Theorem \ref{Om-def}), it is sufficient to verify the constraint in its difference form, i.e. $\Delta (S\tau/\tau) = \Delta\Omega$, in an arbitrary lattice direction. For example, in the $m$ direction one has that
$$\mu\gamma \big( \tau (S\tau_m) - \tau_m (S\tau) \big) = \tau^* (S\tau_m')$$
should be verified.
Obviously, in the case $N=1$, this is satisfied by $\tau'=1, \tau = f(\ell)$ (with $f(\ell)$ subject to the discrete KP dispersion relations $\Delta_{\ell_j} f(\ell) = \Delta_{\ell_i} f(\ell)$ ) and $ \mu\gamma \big( f~\! (Sf_m) - f_m (S f) \big) = \tau^*$, i.e.
$$\tau^* = \mu \gamma \begin{vmatrix}f & T_m f\\ S f & T_m Sf\end{vmatrix} = \gamma \begin{vmatrix}f & \Delta_m f\\ S f & S \Delta_m f\end{vmatrix} = \begin{vmatrix}f & \Delta_m f\\ \Delta_S f & \Delta_S \Delta_m f\end{vmatrix}~\!,$$
which is nothing but $\tau^{(2)}$.
Expressing the general size $N>1$ determinant $\tau^{(N)}$ in terms of shift operators $S$ and $T_m$ rather than difference operators,
\begin{equation*}
\tau^{(N)}~\!=~\! (\mu\gamma)^{\frac{N (N-1)}{2}}~\Big| S^{i-1} T_m^{j-1} f(\ell) \Big|_{i,j=1..N}~\!,
\end{equation*}
one finds that the condition $\mu\gamma \big( \tau^{(N)} (S\tau_m^{(N)}) - \tau_m^{(N)} (S\tau^{(N)}) \big) = \tau^{(N+1)} (S\tau_m^{(N-1)})$ is nothing but the Jacobi identity for the determinants
\begin{multline*}
(\mu\gamma)^{N (N-1)+1} \left[~\!  \Big| S^{i-1} T_m^{j-1} f(\ell) \Big|_{i,j=1..N}~\Big| S^{i} T_m^{j} f(\ell) \Big|_{i,j=1..N} \right.\\\qquad\qquad\qquad\qquad\qquad\left.- ~\!\Big| S^{i-1} T_m^{j} f(\ell) \Big|_{i,j=1..N}~\Big| S^{i} T_m^{j-1} f(\ell) \Big|_{i,j=1..N}~\!\right]\\
= (\mu\gamma)^{N^2 -N +1}~\! \Big| S^{i}~\! T_m^{j} f(\ell) \Big|_{i,j=1..N-1}~\Big| S^{i-1} T_m^{j-1} f(\ell) \Big|_{i,j=1..N+1} ~\!.
\end{multline*}
Hence, $\tau' = \tau^{(N-1)}$ and $\tau^* =  \tau^{(N+1)}$, which proves Theorem \ref{maintheorem}. 
\end{proof}

Next, we show how to obtain Lax pairs for constrained HM equations. The following Lemma is essential for this task.
\begin{lemma}\label{LPfromHM}
Choose an arbitrary lattice direction $k$ on the KP lattice, with lattice parameter $\kappa$. If we denote down-shifts on the lattice by primed subscripts, then the function
\begin{equation*}
\psi := ~\!\dfrac{\tau_{k'}}{\tau}\!\prod_{~j ~\!(\ell_j\neq k)} \left(\dfrac{a_j-\kappa}{a_j}\right)^{-\ell_j}\label{psi}
\end{equation*}
satisfies the discrete KP linear system \eqref{dKP-LP} for any pair of different directions, distinct from $k$.
Furthermore, 
\begin{itemize}
\item[$i)$] if $\tau$ is a constrained tau function, i.e. if as in Theorem \ref{maintheorem} or Lemma \ref{crucial lemma} it satisfies $\gamma S \tau = \tau \Omega(\phi,\phi^*)$, for some specific eigenfunction $\phi$ and adjoint eigenfunction $\phi^*$, then the eigenfunction $\psi$ will satisfy
\begin{equation}
\kappa \gamma \big[ \psi - \Gamma~\! (S\psi) \big]~\!=~\! (S\fie) \phi_{k'}^* \psi~\!,\label{psiconstraint}
\end{equation}
with $\fie$ as defined in Lemma \ref{crucial lemma}. The constant $\Gamma$ is given by
\begin{equation*}
\Gamma := \frac{S\Big[\prod_{j ~\!(\ell_j\neq k)} \left(\dfrac{a_j-\kappa}{a_j}\right)^{\ell_j}\Big]}{\prod_{j ~\!(\ell_j\neq k)} \left(\dfrac{a_j-\kappa}{a_j}\right)^{\ell_j}}~\!.
\end{equation*}
\item[$ii)$] the quantity 
\begin{equation}
\chi := \frac{1}{\kappa} \phi_{k'}^* \psi\label{chidef}
\end{equation}
that appears in \eqref{psiconstraint} satisfies the eigenfunction potential-like equation
\begin{equation}
\Delta ~\!\chi~\! =~\! \phi^* (T\psi)~\!,\label{chidefeq}
\end{equation}
for a shift  $T$ (and forward difference $\Delta$) in an arbitrary direction, different from the direction $k$.
\end{itemize}
\begin{proof}
Consider the HM equation in the $k, m$ and $n$ directions, downshifted once in all three directions:
\begin{equation*}
(\mu-\nu)~\! \tau_{k'} \tau_{m'n'} + (\nu-\kappa)~\! \tau_{m'} \tau_{k'n'} + (\kappa-\mu) ~\!\tau_{n'}\tau_{k'm'} = 0~\!.
\end{equation*}
Then, setting $\tau_{k'} = \tau ~\!\psi ~\!\prod_{j ~\!(\ell_j\neq k)} \left(\dfrac{a_j-\kappa}{a_j}\right)^{\ell_j}$, one immediately obtains the linear equation 
\begin{equation}
\psi_{mn} ~\!= ~\!\frac{1}{\mu-\nu}~\! \frac{\tau_m \tau_n}{\tau \tau_{mn}}~\!\left[~\!\mu~\!\psi_n - \nu~\!\psi_m~\!\right]~\!,\label{psimn}
\end{equation}
downshifted in $m$ and $n$. As the equations in the other directions can be obtained in a similar fashion, this settles the first part of the Lemma. 

For the second part, take the difference $\Delta_k (S\tau/\tau) = \Delta_k\Omega$ of the constraint, which yields
\begin{equation*}
\kappa \gamma \left( \frac{S\tau}{\tau} -  \frac{S\tau_{k'}}{\tau_{k'}}\right)~\!=~\! \phi \phi_{k'}^*~\!,
\end{equation*}
after a downshift in $k$. Using the above substitution for $\tau_{k'}$ in terms of $\psi$, one obtains
\begin{equation*}
\kappa\gamma \frac{S\tau}{\tau} \big[ 1 - \Gamma \frac{S\psi}{\psi}\big]~\!=~\! \phi \phi_{k'}^* = \frac{S\tau'}{\tau} \phi_{k'}^* = \frac{S\tau}{\tau}~\! (S\fie) \phi_{k'}^*~\!,
\end{equation*}
which proves statement $i)$. As for statement $ii)$, since $\phi^*$ is an adjoint eigenfunction for the discrete KP hierarchy, it satisfies e.g. the equation
$$\phi^* = \frac{1}{\mu-\kappa}~\!\frac{\tau_m\tau_k}{\tau \tau_{mk}} \big[ \mu \phi_m^* - \kappa \phi_k^* \big]~\!,$$
and hence, taking a downshift in the $k$ direction and replacing $\tau_{k'}$ and $\phi_{k'}^*$ in this relation by their respective expressions in terms of $\psi$ and $\chi$, one obtains
$$\chi = \frac{1}{\mu} \big[ \mu \chi_m - \phi^* \psi_m \big]~\!,$$
which proves the Lemma.
\end{proof}
\end{lemma}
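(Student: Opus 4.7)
The plan is to prove each of the three claims by direct algebraic manipulation, using in turn the Hirota-Miwa equation, the defining relation for $\Omega$ in Theorem \ref{Om-def}, and the adjoint linear system \eqref{dKP-aLP}.

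For part $(i)$, I would first write the HM equation \eqref{HMs} in the three directions $k,m,n$, but with all three indices downshifted once, so that the combinations $\tau_{k'},\tau_{m'n'},\tau_{k'm'},\tau_{k'n'},\tau_{m'},\tau_{n'}$ appear. Substituting $\tau_{k'}=\tau\psi\prod_{j(\ell_j\neq k)}\bigl(\tfrac{a_j-\kappa}{a_j}\bigr)^{\ell_j}$ into each of the three terms, the $\ell_j$-dependent prefactors of the three terms differ from one another only by a common multiplicative factor (since the $m$- and $n$-shifts affect only the two corresponding exponents), which can be divided out. What remains is precisely the $(m,n)$-instance of \eqref{dKP-LP} for $\psi$, downshifted in $m$ and $n$. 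Repeating the argument for every other pair of directions distinct from $k$ yields the full claim.

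For part $(ii)$, the key observation is that by applying $\Delta_k$ to both sides of the constraint $\gamma S\tau=\tau\,\Omega(\phi,\phi^*)$ and using $\Delta_k\Omega=\phi\phi_k^*$, one obtains, after a downshift in $k$,
\[
\kappa\gamma\bigl(S\tau/\tau-S\tau_{k'}/\tau_{k'}\bigr)=\phi\,\phi_{k'}^*.
\]
Inserting $\tau_{k'}=\tau\psi P$ with $P:=\prod_{j(\ell_j\neq k)}\bigl(\tfrac{a_j-\kappa}{a_j}\bigr)^{\ell_j}$, the ratio $S\tau_{k'}/\tau_{k'}$ factorises as $(S\tau/\tau)\cdot(S\psi/\psi)\cdot(SP/P)$, and $SP/P$ is precisely the constant $\Gamma$ in the statement. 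Writing moreover $\phi=\bar\tau/\tau=(S\tau')/\tau=(S\tau/\tau)\cdot S\fie$ and cancelling the common factor $S\tau/\tau$ then yields \eqref{psiconstraint} after multiplying through by $\psi$.

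For part $(iii)$, I would take an adjoint linear equation from \eqref{dKP-aLP} involving the direction $k$, for instance $\phi^*=\tfrac{1}{\mu-\kappa}\tfrac{\tau_m\tau_k}{\tau\tau_{mk}}(\mu\phi_m^*-\kappa\phi_k^*)$ for some arbitrary direction $m\neq k$ with parameter $\mu$, and downshift it in the $k$ direction. After replacing $\tau_{k'}$ by $\tau\psi P$ and $\phi_{k'}^*$ by $\kappa\chi/\psi$ via \eqref{chidef}, the factor $(a_m-\kappa)/a_m$ coming from $P$ pairs cleanly with $1/(\mu-\kappa)$, and a short rearrangement produces $\mu(\chi_m-\chi)=\phi^*\psi_m$, which in the normalisation of \eqref{Del} is precisely \eqref{chidefeq} in the $m$ direction. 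The main (and only) obstacle is pure bookkeeping: one must track how the exponential prefactor $P$ transforms under the various shifts $T_k,T_m,T_n,S$, and in particular verify that $SP/P$ reproduces exactly the constant $\Gamma$ in part $(ii)$. Once this is organised, each of the three parts reduces to a single invocation of the Hirota-Miwa equation, the defining difference relation of $\Omega$, or the adjoint linear system, respectively.
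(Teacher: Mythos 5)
Your proposal follows essentially the same route as the paper for all three claims: the triple-downshifted Hirota--Miwa equation with the substitution $\tau_{k'}=\tau\psi P$, the $k$-difference of the constraint with the factorisation $S\tau_{k'}/\tau_{k'}=(S\tau/\tau)(S\psi/\psi)\,\Gamma$, and the $k$-downshifted adjoint linear equation with $\phi_{k'}^*=\kappa\chi/\psi$. The only blemish is the intermediate line $\Delta_k\Omega=\phi\,\phi_k^*$, which by Theorem \ref{Om-def} should read $\Delta_k\Omega=\phi_k\,\phi^*$; your subsequent displayed identity (with $\phi\,\phi_{k'}^*$ after the downshift in $k$) is the correct one, so nothing downstream is affected.
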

Note that the functions $\psi$ and $\chi$ defined in Lemma \ref{LPfromHM}, satisfy a set of equations (\ref{psiconstraint},\ref{chidefeq}) which are -- essentially -- linear in these two fields. They will therefore play a crucial role in the construction of Lax pairs for the constrained systems. 

Let us start with the construction of the Lax pair for the discrete NLS equation \eqref{dNLS}, which can be regarded as proof of its integrability.

\begin{proof}[{\bf Proof of Proposition \ref{dNLSprop}}]
Take the shift operator $S$ and multiplicative constant $\gamma$ in the constraint \eqref{constraint} to be $\gamma=\nu$ and $S = T_n$. In that case, the constraint
$$\nu~\! \tau_n = \tau~\! \Omega(\phi,\phi^*)$$
will yield
$$\tau_{x_1} =  \tau~\! \Omega(\phi,\phi^*)~\!,$$
at the continuum limit. This follows immediately from the fact that the squared eigenfunction potential $\Omega(\phi,\phi^*)$ is only determined up to an (arbitrary) additive constant. The constraint can therefore also be expressed as
$$\Delta_S~\! \tau = \nu (\tau_n - \tau) = \tau~\! \Omega(\phi,\phi^*)~\!,$$
which, under the Miwa-transformation \eqref{contlim}, behaves as
$$\tau_{x_1} + {\mathcal O}(\nu^{-1}) =  \tau~\! \Omega(\phi,\phi^*)~,$$
for $|\nu|\rightarrow\infty$. The $x_1$-derivative of the continuum limit of this constraint is nothing but the symmetry constraint which is used to obtain the NLS equation from the KP hierarchy \cite{KonopelchenkoS} :
$$\dfrac{\partial^2}{\partial x_1^2} \log\tau = \Omega_{x_1} = \phi\phi^*~\!.$$
Note that, at the continuum limit, both $\phi=\tau_n/\tau$ and $\fie=\tau'/\tau$ tend to the same function, at lowest order. We can therefore also write this result in a more suggestive way, using the variable names of the discrete system:
$$\dfrac{\partial^2}{\partial x_1^2} \log\tau = \phi^* \fie~\!.$$
On the contrary, taking e.g. the difference in the $m$ direction of the discrete constraint
$$\Delta_m \big(\frac{\tau_n}{\tau}\big) = \frac{1}{\nu} \phi^*\phi_m~\!,$$
just as in the proof of Lemma \ref{crucial lemma}, we obtain, 
\begin{equation}
\frac{\tau_n \tau_m}{\tau \tau_{mn}}~\!=~\! 1 - \frac{1}{\mu\nu}~\! \phi^* \fie_{mn}~\!,\label{dNLS-diffconst}
\end{equation}
which is the discrete equivalent of the above continuous constraint.

Turning our attention now to the Lax pair, it is readily verified that in this case ($S=T_n$) the constant $\Gamma$ in equation \eqref{psiconstraint} is simply $\Gamma=(\nu-\kappa)/\nu$. Then, introducing the function $\chi$ defined by \eqref{chidef} into that same relation, we obtain a very simple linear equation connecting $\psi$ and $\chi$:
\begin{equation}
\psi_n = \dfrac{1}{\nu-\kappa}~\!\big[ \nu~\! \psi - \fie_n \chi \big]~\!.\label{dNLS-LP-d1}
\end{equation}
Furthermore, the dependence on the $n$ direction of $\chi$ is also known because of \eqref{chidefeq}:
$$\Delta_n \chi = \nu(\chi_n - \chi) = \phi^* \psi_n~\!,$$
which on account of \eqref{dNLS-LP-d1} can be rewritten as
\begin{equation}
\chi_n = \dfrac{1}{\nu-\kappa}~\!\big[ \phi^* \psi + \big(~\! (\nu-\kappa) - \frac{1}{\nu} \phi^* \fie_n\big) \chi \big]~\!.\label{dNLS-LP-d2}
\end{equation}
Let us choose an auxiliary direction (different from $k$) on the lattice, say $m$, and let us calculate the evolution of $\psi$ in that direction from  \eqref{dNLS-LP-d1}. Bearing in mind that both $\psi$ and $\fie$ satisfy the linear equations for the dKP hierarchy (and in particular equations \eqref{psimn} and \eqref{fiedefeq}), and using \eqref{chidefeq} to define the shift of $\chi$ in the $m$ direction, the $T_m$-shift of  \eqref{dNLS-LP-d1} can be re-arranged in the following way:
\begin{gather*}
\psi_{mn} =  \dfrac{1}{\nu-\kappa}~\!\big[ \nu~\! \psi_m - \fie_{mn} \chi_m \big]
\end{gather*}
\begin{multline*}
\Leftrightarrow\quad \frac{1}{\mu-\nu} \frac{\tau_n \tau_m}{\tau \tau_{mn}} \big[ \mu \psi_n-\nu \psi_m \big] ~\!=~\!   \frac{1}{\nu-\kappa} \Big[ \nu (1 - \frac{1}{\mu\nu} \phi^*\fie_{mn}) \psi_m  \\
- \frac{\chi}{\mu-\nu} \frac{\tau_n \tau_m}{\tau \tau_{mn}} \big( \mu \fie_n-\nu \fie_m \big) \Big]~\!,
\end{multline*}
which, using relation \eqref{dNLS-diffconst} for the constraint and equation \eqref{dNLS-LP-d1}, is nothing but
\begin{equation*}
\psi_m = \dfrac{1}{\mu-\kappa}~\!\big[ \mu~\! \psi - \fie_m \chi \big]~\!.\label{dNLS-LP-d3}
\end{equation*}
Note that this is exactly the same relation as that in the $n$ direction, i.e. equation \eqref{dNLS-LP-d1}. Hence, the $m$ evolution of $\chi$ will take the same form as \eqref{dNLS-LP-d2} and we obtain the following system of linear equations for the (vector) function $\Psi:= ~\!^t\ \!\!( \psi~\!~\chi )$:
\begin{gather}
\Psi_m = \frac{1}{\mu-\kappa}~\begin{pmatrix}
\mu & -\fie_m \\
\phi^* & \mu-\kappa - \frac{1}{\mu} \phi^* \fie_m\end{pmatrix} \cdot \Psi~\!,\label{dNLS-LP1}\\[-2mm]\nonumber\\
\Psi_n = \frac{1}{\nu-\kappa}~\begin{pmatrix}
\nu & -\fie_n \\
\phi^* & \nu-\kappa - \frac{1}{\nu} \phi^* \fie_n\end{pmatrix} \cdot \Psi~\!.\label{dNLS-LP2}
\end{gather}
The compatibility condition $T_n \Psi_m = T_m \Psi_n$ of these linear equations yields exactly equation \eqref{dNLS}, independently of $\kappa$. Hence, one can conclude that the linear system (\ref{dNLS-LP1},\ref{dNLS-LP2}) is a Lax pair, with spectral parameter $\kappa$, for the dNLS equation \eqref{dNLS}.

The Hirota bilinear form of the dNLS equation is obtained directly from equations \eqref{dNLS-diffconst}, \eqref{taustdefeq} and \eqref{taupdefeq}.
\end{proof}
Taking the continuum limit \eqref{contlim} for $|\mu|\rightarrow\infty$ of \eqref{dNLS-LP1}, one obtains
\begin{equation}
\Psi_{x_1} = \begin{pmatrix}\kappa & -\fie\\ \phi^* & 0\end{pmatrix}\cdot\Psi\label{NLS-LP1}~\!,
\end{equation}
which together with \eqref{dNLS-LP2} forms a Lax pair for the semi-discrete NLS equation \eqref{sdNLS}. Taking,  finally, the continuum limit $|\nu|\rightarrow\infty$ in \eqref{dNLS-LP2}, subject to \eqref{NLS-LP1}, one obtains
$$\Psi_{x_2} = \begin{pmatrix}\kappa^2+\fie\phi^* & -\fie_{x_1}-\kappa\fie \\ \kappa\phi^*-\phi^*_{x_1} & -\phi^*\fie\end{pmatrix}\cdot\Psi~\!,$$
which together with \eqref{NLS-LP1} constitutes a Lax pair for the NLS equation \eqref{NLS}. Note that in the usual theory of symmetry constraints, the constraint is formulated in terms of a pseudo-differential form of the Lax equations \cite{Cheng}. In fact, as equation \eqref{chidefeq} tells us that the function $\chi$ is very similar to a squared eigenfunction potential defined for the pair $(\psi, \phi^*)$, the first linear equation \eqref{NLS-LP1} can be thought of as saying that $\psi_{x_1} = \kappa\psi - \fie~\! \Omega(\psi,\phi^*)$, which is one way to interpret the relation $\big(\partial_{x_1} + \fie~\! \partial^{-1} \phi^*\big) \psi = \kappa \psi$ in \cite{Cheng}. 

\smallskip
Although the choice of the function $\chi$ \eqref{chidef} turned out to be quite judicious, especially in view of the deeper meaning of the equations \eqref{chidefeq}, one can wonder whether it is the only interesting one. It so happens that there is another interesting candidate for a function that can be used to construct a Lax pair from \eqref{psiconstraint}:
$$\widetilde\chi := \frac{1}{\kappa} \phi_{k'}^* (S\fie) \psi~\!.$$
This choice allows one to completely eliminate the tau function $\tau^*$ from the problem. The proof of Proposition \ref{dBKprop} will illustrate this fact for the case of the reduction to the dNLS equation.
\begin{proof}[{\bf Proof of Proposition \ref{dBKprop}}]
In case $S=T_n$, the change of variables $\chi \mapsto \widetilde\chi$ corresponds to a gauge transformation on the Lax pair (\ref{dNLS-LP1},\ref{dNLS-LP2})
$$\Psi~\mapsto~\widetilde\Psi := \begin{pmatrix}\psi\\\widetilde\chi\end{pmatrix} ~\!=~\! \begin{pmatrix} 1 & 0 \\ 0 & \fie_n\end{pmatrix}\cdot \Psi~\!,$$
which yields the linear equations:
\begin{gather*}
\widetilde\Psi_m = \frac{1}{\mu-\kappa}~\begin{pmatrix}
\mu & -\dfrac{\fie_m}{\fie_n} \\
\phi^*\fie_{mn} & (\mu-\kappa)\dfrac{\fie_{mn}}{\fie_n} - \frac{1}{\mu} \phi^* \dfrac{\fie_m\fie_{mn}}{\fie_n}\end{pmatrix} \cdot \widetilde\Psi~\!,\\[-2mm]\nonumber\\
\widetilde\Psi_n = \frac{1}{\nu-\kappa}~\begin{pmatrix}
\nu & -1 \\
\phi^*\fie_{nn} & (\nu-\kappa)\dfrac{\fie_{nn}}{\fie_n}  - \frac{1}{\nu} \phi^* \fie_{nn}\end{pmatrix} \cdot \widetilde\Psi~\!.
\end{gather*}
If we now introduce the new variables
\begin{gather*}
H := \dfrac{\tau_n \tau_m'}{\tau_m \tau_n'} = \frac{\fie_m}{\fie_n}\label{Hdef}~\!,\\[-3mm]\nonumber\\
U := \dfrac{\tau \tau_{mn}}{\tau_m \tau_n} ~\!, \label{Udef}
\end{gather*}
which, because of the constraint \eqref{dNLS-diffconst} can also be written as
\begin{equation*}
U = \dfrac{1}{1-\frac{1}{\mu\nu} \phi^*\fie_{mn}}\label{Ualtdef}~\!,
\end{equation*}
one readily obtains the linear system:
\begin{gather}
\widetilde\Psi_m = \frac{1}{\mu-\kappa}~\begin{pmatrix}
\mu & -H \\[-2mm] \\
\mu\nu\dfrac{U-1}{U} & \dfrac{\mu-\kappa}{\mu-\nu}~\!\dfrac{(\mu-\nu H)}{U} +\nu \dfrac{H (1-U)}{U}\end{pmatrix} \cdot \widetilde\Psi~\!,\label{dBK-LP1}\\[-2mm]\nonumber\\
\widetilde\Psi_n = \frac{1}{\nu-\kappa}~\begin{pmatrix}
\nu & -1 \\[-2mm] \\
\mu\nu\dfrac{U-1}{U H_n} & \dfrac{\nu-\kappa}{\mu-\nu}~\!\dfrac{(\mu-\nu H)}{U H_n} +\nu \dfrac{(1-U)}{U H_n}\end{pmatrix} \cdot \widetilde\Psi~\!.\label{dBK-LP2}
\end{gather}

The compatibility condition of this linear system is independent of the spectral parameter $\kappa$ and is identical to the discrete Broer-Kaup equation \eqref{dBK}. The tau functions $\tau$ and $\tau'$ that define the solutions $U$ and $H$ are of course those of the dNLS equation and the Hirota bilinear equations they satisfy are \eqref{taupdefeq} and \eqref{taubdefeq} with $\bar\tau = S\tau' = \tau_n'$.

The continuum limit is obtained through the ansatz
$$ U = 1 + \dfrac{1}{\mu\nu}~\! u~\!, \quad H = 1+ \dfrac{\nu-\mu}{\mu\nu} ~\!h~\!,$$
which yields the tau function expressions
$$u= (\log\tau)_{x_1 x_1}~\!,\quad h=\big(\log\dfrac{\tau'}{\tau}\big)_{x_1}~\!,$$
for the solutions $u$ and $h$ to the (continuous) Broer-Kaup system \eqref{BK}.
\end{proof}
Note that a Lax pair for the BK equation \eqref{BK} can be easily obtained at the continuum limit $|\mu|,|\nu| \rightarrow \infty$ of the linear equations \eqref{dBK-LP1} and\eqref{dBK-LP2}:
\begin{gather}
\widetilde\Psi_{x_1} = \begin{pmatrix}\kappa & -1 \\ u & h\end{pmatrix}\cdot\widetilde\Psi\label{BK-LP1}~\!,\\
\widetilde\Psi_{x_2} = \begin{pmatrix}\kappa^2 + u & -\kappa-h \\ \kappa u + u h - u_{x_1} & -u -h_{x_1} - h^2\end{pmatrix}\cdot\widetilde\Psi\label{BK-LP2}~\!.
\end{gather}

This construction of a discretisation of the Broer-Kaup equation through a gauge transformation and subsequent change of variables applied to the Lax pair for the (discrete) NLS equations, in fact offers a remarkably faithful analogy to the continuous situation. As described in \cite{JaulentM, Sachs}, the Lax pair for the Broer-Kaup equation (\ref{BK-LP1},\ref{BK-LP2}) can be obtained from that for the NLS equation in the context of a rather general scheme of gauge transformations for so-called energy dependent scattering problems.

\smallskip
Finally, let us consider the case of the discrete Yajima-Oikawa system, for which $S=T_m T_n$:
\begin{proof}[{\bf Proof of proposition \ref{dYOprop}}]
In this case, once again exploiting the freedom in the squared eigenfunction potential, the constraint
$$\gamma~\! (\tau_{mn} - \tau) = \tau~\! \Omega(\phi,\phi*)$$
can, under the Miwa-transformation \eqref{contlim}, be expanded for large $\mu$ and $\nu$  as
$$\gamma~\! \big( (\dfrac{1}{\mu}+\dfrac{1}{\nu})~\! \tau_{x_1} + \frac{1}{2}(\dfrac{1}{\mu^2}+\dfrac{1}{\nu^2})~\! \tau_{x_2} + \frac{1}{2}(\dfrac{1}{\mu}+\dfrac{1}{\nu})^2~\! \tau_{x_1x_1} + \cdots\big) =  \tau~\! \Omega(\phi,\phi*)~,$$
which clearly cannot yield the required constraint on the $x_2$ derivative of $\tau$ unless $\nu=-\mu$. Hence, imposing $\nu=-\mu$, we choose $\gamma=-\mu\nu=\mu^2$, so that we have
$$\tau_{x_2} + {\mathcal O}(\mu^{-1})  =  \tau~\! \Omega(\phi,\phi^*)~,$$
and therefore
$$\tau_{x_2} =  \tau~\! \Omega(\phi,\phi*)~\!,$$
or 
$$\dfrac{\partial^2}{\partial x_1\partial x_2} \log\tau = \Omega_{x_1} = \phi\phi^* = \phi^*\fie~\!,$$
at the continuum limit. 

Taking the difference in the $m$ direction of the discrete constraint, one easily obtains
$$\mu^3( \tau \tau_{mmn} - \tau_m \tau_{mn} ) = \tau^* \tau_{mmn}' ~\!,$$
which is the discrete equivalent of the above continuous constraint. It is also the first equation in the Hirota bilinear form (\ref{dYO-BF1}--\ref{dYO-BF3}) presented in Proposition \ref{dYOprop}. The remaining two bilinear equations \eqref{dYO-BF2} and \eqref{dYO-BF3} are nothing but the equations \eqref{taustdefeq} and \eqref{taupdefeq} for $\nu=-\mu$.

Obtaining the Lax pair for the discrete Yajima-Oikawa equation turns out to be a lot easier than in the case of the discrete NLS equation. First of all, for $S=T_mT_n$ and $\nu=-\mu$, the constant $\Gamma$ in Lemma \ref{LPfromHM} takes the value
$$\Gamma = \dfrac{\mu^2-\kappa^2}{\mu^2}~\!,$$
and the linear equation \eqref{psiconstraint} can be cast into the form
\begin{equation}
\psi_{mn} = \dfrac{1}{\mu^2-\kappa^2} \big( \mu^2 \psi - \fie_{mn} \chi\big)~\!.\label{psimn-bis}
\end{equation}
However, as was shown in the same Lemma, $\psi_{mn}$ also satisfies equation \eqref{psimn}, which we shall rewrite as
\begin{equation*}
\psi_{mn} = \dfrac{\psi_m + \psi_n}{2 U}~\!,
\end{equation*}
in terms of the (new) variable 
\begin{equation}
U := \dfrac{\tau \tau_{mn}}{\tau_m\tau_n}~\!.
\end{equation}
This of course gives a simple expression for the sum of $\psi_m$ and $\psi_n$:
$$\psi_m + \psi_n = \dfrac{2U}{\mu^2-\kappa^2} \big( \mu^2 \psi - \fie_{mn} \chi\big)~\!.$$
It then suffices to define an auxiliary function $\psi'$ 
\begin{equation}
\psi' := \dfrac{\mu^2-\kappa^2}{2\mu} (\psi_m - \psi_n)~\!,\label{psipdef}
\end{equation}
to obtain the linear equations
\begin{gather*}
\psi_m = \dfrac{1}{\mu^2-\kappa^2} \big( \mu^2 U \psi + \mu \psi' - U \fie_{mn} \chi\big)~\!,\\
\psi_n = \dfrac{1}{\mu^2-\kappa^2} \big( \mu^2 U \psi - \mu \psi' - U \fie_{mn} \chi\big)~\!,
\end{gather*}
linking all three functions $\psi, \psi'$ and $\chi$. Repeated use of equations \eqref{psimn-bis} and \eqref{chidefeq} in both the $m$ and $n$ directions then yields the following linear system for the function $\Psi:= ~\!^t\ \!\!( \psi~\!~\psi'~\!~\chi )$:
\begin{gather}
\Psi_m = \dfrac{1}{\mu^2-\kappa^2}~\begin{pmatrix}
\mu^2 U & \mu & -U \fie_{mn}\\[-3mm]\\
\mu~\! A & U_m \big(\mu^2-\frac{1}{\mu} \fie_{mmn}\phi^*\big) & \frac{1}{\mu} B\\[-3mm]\\
\mu~\! U \phi^* & \phi^* & (\mu^2-\kappa^2) - \frac{1}{\mu}U \fie_{mn} \phi^*
\end{pmatrix} \cdot \Psi~\!,\label{dYO-LP1}\\
\Psi_n = \dfrac{1}{\mu^2-\kappa^2}~\begin{pmatrix}
\mu^2 U & -\mu & -U \fie_{mn}\\[-3mm]\\
\mu~\! C & U_n \big(\mu^2+ \frac{1}{\mu} \fie_{mnn}\phi^*\big) & \frac{1}{\mu} D\\[-3mm]\\
-\mu U \phi^* & \phi^* & (\mu^2-\kappa^2) + \frac{1}{\mu} U \fie_{mn} \phi^*
\end{pmatrix} \cdot \Psi~\!,\label{dYO-LP2}
\end{gather}
with
\begin{gather*}
A = - (\mu^2-\kappa^2) + U U_m \big(\mu^2 - \frac{1}{\mu} \fie_{mmn}\phi^*\big)\\
B = (\mu^2-\kappa^2) \big(\fie_{mn} - U_m \fie_{mmn}\big) - U U_m \big(\mu^2 - \frac{1}{\mu} \fie_{mmn}\phi^*\big) \fie_{mn} \\
C = (\mu^2-\kappa^2) - U U_n \big(\mu^2 + \frac{1}{\mu} \fie_{mnn}\phi^*\big)\\
D = (\mu^2-\kappa^2)  \big(-\fie_{mn} + U_n \fie_{mnn}\big) + U U_n \big(\mu^2+ \frac{1}{\mu} \fie_{mnn}\phi^*\big) \fie_{mn}~\!.
\end{gather*}
This system is compatible, independently of the value of $\kappa$, provided that the functions $U, \phi^*$ and $\fie$ satisfy the discrete Yajima-Oikawa system \eqref{dYO}. Hence, (\ref{dYO-LP1},\ref{dYO-LP2}) constitutes a Lax pair with spectral parameter $\kappa$ for the discrete Yajima-Oikawa system.
\end{proof}
The following Lax pair for the continuous Yajima-Oikawa system \eqref{YO} can be obtained at the continuum limit of the discrete Lax pair (\ref{dYO-LP1},\ref{dYO-LP2}), with the ansatz $U = 1 - u/\mu^2$ for the function $U$:
\begin{gather*}
\Psi_{x_1} = \begin{pmatrix} 0 & 1 & 0 \\
\kappa^2-2 u & 0 & - \fie\\
\phi^* & 0 & 0 \end{pmatrix}\cdot\Psi~\!,\\[-3mm]\\
\Psi_{x_2} = \begin{pmatrix} \kappa^2 & 0 & -\fie \\
-\fie\phi^* & \kappa^2 & - \fie_{x_1}\\
\phi_{x_1}^* & \phi^* & 0 \end{pmatrix}\cdot\Psi~\!.
\end{gather*}
Note that the continuum limit of $\psi'$ as given by \eqref{psipdef} is just $\psi_{x_1}$, which is nothing but the first relation in the Lax pair. Since the (continuous function) $\chi$ can be regarded as the squared eigenfunction potential $\Omega(\psi, \phi^*)$ defined for $\psi$ and $\phi^*$, this equation can be interpreted as $\psi_{x_1x_1} = (\kappa^2-2 u)\psi - \fie~\! \Omega(\psi,\phi^*)$ which gives a specific realization of the pseudo-differential constraint $\big( \partial_{x_1}^2 + 2 u + \fie \partial^{-1} \phi^* \big) \psi = \kappa^2 \psi$ discussed in \cite{Cheng}.

\section{Hierarchies and extensions}\label{extensions}
In this section we shall discuss two possible extensions of our approach to discrete constraints. A first extension concerns discretisations for entire hierarchies of integrable systems. Using the example of the discrete NLS equation we shall argue that we have, in fact, already obtained a discrete form for the entire (continuous) NLS hierarchy. A second extension is that to discrete constraints involving shift operators in more than 3 dimensions. It will be explained that such constraints can indeed yield discretisations of higher order constrained KP hierarchies such as that associated to the Melnikov system \cite{Melnikov}, but that problems arise when one tries to construct Lax pairs for single equations contained in the corresponding hierarchies. 

\smallskip
As was explained in section \ref{mainresults}, a first, simple continuum limit (e.g. $|\nu|\rightarrow\infty$) of the discrete NLS equation \eqref{dNLS}, yields the semi-discrete system \eqref{sdNLS}
\begin{equation}
\left\{\begin{array}{l}
\mu(\phi_m^* - \phi^*) = \phi_{x_1}^* - \frac{1}{\mu} (\phi^*)^2 \fie_m\\[-2mm]\\
\mu(\fie_m - \fie) = (\fie_m)_{x_1} + \frac{1}{\mu} \phi^* (\fie_m)^2~\!.
\end{array}\right.\label{sdNLS-bis}
\end{equation}
As explained before, a subsequent limit, $|\mu|\rightarrow\infty$, yields the NLS equation, at lowest order.

It is however interesting to take a closer look at the general expansion of the equations in this system, rather than just considering the lowest order part. In general, for $|\mu|\approx \infty$, one obtains
\begin{equation}
\left\{\begin{array}{l}
\!\!\!\!\sum_{j=2}^{+\infty} ~\!\mu^{1-j} p_j(\tilde\partial) \phi^* = - (\phi^*)^2 \sum_{j=0}^{+\infty}~\! \mu^{-1-j} p_j(\tilde\partial) \fie\\[-2mm]\\
\!\!\!\!\sum_{j=2}^{+\infty} ~\!\mu^{1-j} p_j(\tilde\partial) \fie = \left( \sum_{j=1}^{+\infty}~\! \mu^{-j} p_j(\tilde\partial) \fie\right)_{\!x_1}\!\! + \mu^{-1}~\! \phi^* \!\left( \sum_{j=0}^{+\infty}~\! \mu^{-j} p_j(\tilde\partial) \fie\right)^2
\end{array}\right.\label{NLS-expansion}
\end{equation}
where, as in the case of \eqref{KP-LP} and \eqref{KP-aLP}, $p_j(\tilde\partial)$ denotes the weight $j$ Schur polynomial in the (weighted) partial differential operators $\tilde\partial=( \partial_{x_1},\partial_{x_2}/2, \partial_{x_3}/3, \cdots)$. The NLS equation \eqref{NLS} is obtained from these expansions at order $\mu^{-1}$. Supposing that these expansions are satisfied identically, i.e. that they are satisfied at {\em all} orders in $\mu$, one obtains the recursive system:
\begin{equation}
\left\{\begin{array}{l}
p_j(\tilde\partial) \phi^* = -(\phi^*)^2~\! p_{j-2} (\tilde\partial) \fie\\[-2mm]\\
p_j(\tilde\partial) \fie = p_{j-1}(\tilde\partial) \fie_{x_1} + \phi^* \sum_{i=0}^{j-2} \big(p_i(\tilde\partial) \fie\big) \big(p_{j-2-i}(\tilde\partial) \fie\big)
\end{array}\right.\quad(~\!^\forall j \geq 2)~\!.\label{NLS-recur1}
\end{equation}
For example, at $j=3$ one finds the system
\begin{equation*}
\left\{\begin{array}{l}
\phi_{x_3}^* = \phi_{3 x_1}^* + 6  \phi^* \phi_{x_1}^* \fie\\[-2mm]\\
\fie_{x_3} = \fie_{3 x_1} + 6 \phi^* \fie \fie_{x_1}
\end{array}\right.~\!,
\end{equation*}
if one uses the NLS equation \eqref{NLS} to eliminate all $x_2$ derivatives. This system is readily identified as the first higher order flow in the NLS hierarchy, generated by the recursion relation \cite{Cheng}:
$$\begin{pmatrix} \fie \\ \phi^*\end{pmatrix}_{x_r} = \begin{pmatrix} \partial_{x_1} + 2 \fie \partial^{-1} \phi^* & 2 \fie\partial^{-1} \fie \\
-2 \phi^* \partial^{-1} \phi^* & -\partial_{x_1}-2 \phi^* \partial^{-1} \fie
\end{pmatrix}^r\cdot \begin{pmatrix}\fie \\ -\phi^*\end{pmatrix}\quad (~\!^\forall r\geq 2)~\!,$$
where $\partial^{-1}$ denotes the formal inverse of $\partial_{x_1}$ : $\partial_{x_1}\partial^{-1} = 1 =  \partial^{-1}\partial_{x_1}$.

In fact, the nonlinear recursion \eqref{NLS-recur1} does yield the entire hierarchy of commuting flows associated to the NLS equation (apart from the, trivial, $x_1$ flow). Or, in other words, the expansion \eqref{NLS-expansion} is indeed valid at all orders. This can be understood quite easily if one remembers how the discrete NLS equation \eqref{dNLS} was obtained. As was explained in the proof of Proposition \ref{dNLSprop} in section \ref{proofs}, the $m$ direction that appears in de discrete NLS equation is just one choice, among infinitely many possible directions on the discrete KP lattice. One could have performed the same calculations as those for the $m$ direction, for each of these other directions, each time obtaining exactly the same equation \eqref{dNLS}, but for the shifts in the $m$ direction that need to be replaced by shifts in the newly chosen direction. Since the constraint was shown to be consistent with the entire discrete KP hierarchy, it turns out that there is a single discrete equation  which is valid in infinitely many directions (e.g. $\ell_i$, with lattice parameter $a_i$):
\begin{equation*}
\left\{\begin{matrix}
a_i \phi_{\ell_i}^* - \nu \phi_n^* = \dfrac{(a_i-\nu) \phi^*}{1-\frac{1}{\nu a_i} \phi^*\fie_{n\ell_i}}~\!\\[-1.5mm]\\
a_i \fie_n - \nu \fie_{\ell_i} = \dfrac{(a_i-\nu) \fie_{n\ell_i}}{1-\frac{1}{\nu a_i} \phi^*\fie_{n \ell_i}}~\!.
\end{matrix}\right.
\end{equation*}
Infinitely many copies of this equation, when considered simultaneously (so as to make the expansion \eqref{NLS-expansion} valid at all orders) then offer a discretisation of the entire NLS hierarchy. Needless to say, this situation faithfully mimics that of the discrete KP hierarchy. This then leads to the proposition:
\begin{prop}
The equations \eqref{dNLS}, \eqref{dBK} and \eqref{dYO} can be regarded as  discretisations of the hierarchies of commuting flows associated to, respectively,  the NLS, Broer-Kaup and Yajima-Oikawa equations.
\end{prop}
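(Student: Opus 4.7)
The plan is to formalise and extend the argument already outlined in the paragraphs immediately preceding the proposition. For all three systems, the starting point is that Theorem \ref{maintheorem} guarantees the constraint \eqref{constraint} to be compatible with the entire discrete KP hierarchy. Consequently, the derivations of \eqref{dNLS}, \eqref{dBK} and \eqref{dYO} carried out in section \ref{proofs} for the fixed direction pair $(m,n)$ carry over verbatim to any other direction pair $(\ell_i,n)$ (subject, in the dYO case, to the accompanying reduction $a_j=-a_i$), simply with $\mu$ replaced by the lattice parameter $a_i$. This yields, for each of the three systems, an infinite family of identical lattice equations indexed by the free parameter $a_i\in\mathbb{C}^\times$.

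Next I would perform the Miwa expansion \eqref{Miwatransf} along the $\ell_i$ direction, keeping the remaining lattice parameters fixed. Since the $a_i$ may be chosen mutually independent in $\mathbb{C}^\times$, every coefficient in the asymptotic expansion in $a_i^{-1}$ must vanish in its own right, which transforms each family of lattice equations into an infinite tower of differential relations. For dNLS this tower is exactly the recursion \eqref{NLS-recur1}, whose leading non-trivial order returns NLS and whose order $a_i^{-3}$ term yields the first higher mKdV-type flow displayed above the statement; an inductive check identifies the entire recursion with the NLS hierarchy generated by the recursion operator of \cite{Cheng}. For dBK the same analysis applies without change since its tau functions and bilinear form coincide with those of dNLS; the BK hierarchy is then obtained by transporting the dNLS recursion through the dependent variable transformations of Proposition \ref{dBKprop}. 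For dYO, one varies $a_i$ while keeping the partner parameter at $-a_i$, and the corresponding correlated expansion of the bilinear form (\ref{dYO-BF1}--\ref{dYO-BF3}) produces the YO hierarchy.

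The main obstacle will be the rigorous identification, at every order, of the relations produced by the Miwa expansion with the known commuting flows of the respective continuous hierarchy, rather than with some non-trivial linear combination or reparametrisation of them. The cleanest way to dispose of this is to invoke the bi-directional Casorati determinant class \eqref{bidircaso} and its continuum limit \eqref{bidirWronski}, which is precisely the tau-function class underlying the NLS hierarchy and, through the appropriate dependent variable transformations, also the BK and YO hierarchies. Any flow consistent with the full class is then automatically a member of the hierarchy, and combining this with the explicit low-order checks already carried out completes the argument.
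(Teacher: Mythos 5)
Your proposal follows essentially the same route as the paper: the compatibility of the constraint with the full discrete KP hierarchy lets the same lattice equation hold in every direction $\ell_i$, and since $a_i$ is a free parameter the Miwa expansion must hold order by order, producing the recursive tower \eqref{NLS-recur1} that is then identified with the hierarchy of \cite{Cheng} (with the BK and YO cases handled by the corresponding variable transformations and the $a_j=-a_i$ restriction). Your closing appeal to the Casorati/Wronskian solution class is an extra safeguard the paper does not spell out, but it does not change the substance of the argument.
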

A similar property has been reported in \cite{Nijhoff-LGD} for the lattice KdV and lattice Boussinesq equations (cf. also \cite{TongasN} for some intriguing connections to other types of integrable equations).

We shall not present the precise forms of the recursion formulae for the Broer-Kaup or Yajima-Oikawa hierarchies, as the relations become quite cumbersome, but it is interesting to note that, instead of using the equation for $\fie$ in \eqref{sdNLS-bis} to calculate the continuum limit, one could just as well have started from its down-shifted version
$$\mu(\fie - \fie_{m'}) = \fie_{x_1} + \frac{1}{\mu} \phi_{m'}^* (\fie)^2~\!,$$
which then results in a different and much simpler recursive formula for the $\fie$-flows:
\begin{equation}
p_j({-\tilde\partial})\fie = - \fie^2~\! p_{j-2}(-\tilde\partial) \phi^*\quad (~\!^\forall j\geq 2)~\!.\label{NLS-recur2}
\end{equation}

Furthermore, as explained in the proof of proposition \ref{dNLSprop}, the continuum limit of the discrete constraint can be written as 
$$\big(\log\tau\big)_{x_1} =  \Omega(\fie,\phi^*)~\!,$$
which, combined with the fact that (the continuum limits) of $\fie$ and $\phi^*$ satisfy the KP linear system \eqref{KP-LP} and adjoint linear problem \eqref{KP-aLP}, then leads to yet another recursive formulation of the NLS hierarchy, in terms of the squared eigenfunction potential $\Omega(\fie,\phi^*)$:
\begin{equation*}
\left\{\begin{array}{l}
p_j(\tilde\partial) \phi^* = -\phi^*~\! p_{j-1}(\tilde\partial) \Omega(\fie,\phi^*)\\[-2mm]\\
p_j(-\tilde\partial) \fie= \fie~\! p_{j-1}(-\tilde\partial) \Omega(\fie,\phi^*)
\end{array}\right.\quad(~\!^\forall j \geq 2)~\!.\label{NLS-recur3}
\end{equation*}
Combining these expressions with those in \eqref{NLS-recur1} and \eqref{NLS-recur2}, one obtains the fundamental identities
$$^\forall j\geq 2:\quad p_{j-1}(\tilde\partial) \Omega(\fie,\phi^*) = \phi^*~\!p_{j-2} (\tilde\partial) \fie~\!,\quad p_{j-1}(-\tilde\partial) \Omega(\fie,\phi^*)= -\fie~\!p_{j-2}(-\tilde\partial) \phi^*~\!,$$
for the squared eigenfunction potential that were shown to hold for the case of the (unconstrained) KP hierarchy in \cite{RW-IP}.

\smallskip
Another interesting problem is that of constructing higher order constraints which would offer integrable discretisations of general $k$-constrained hierarchies. Let us consider the case of the 3-constrained KP hierarchy, which yields the famous Melnikov system \cite{Melnikov}, as an example.

Let us impose the $3$-constraint
 \begin{equation}
 \tau_{x_3} = \tau~\!\Omega(\phi,\phi^*)\label{3constraint}
 \end{equation}
on a KP tau function -- i.e. $\tau$ satisfies \eqref{KP-BF} -- and a KP eigenfunction $\phi$ and adjoint eigenfunction $\phi^*$. In other words, $\phi$ and $\phi^*$ satisfy the KP linear system (and its adjoint), the first few equations of which are:
\begin{gather}
\left\{\begin{array}{l}
\phi_{x_2} = \phi_{2 x_1} + 2 u \phi\\[-3mm]\\
\phi_{x_2}^* = - \big(\phi_{2 x_1}^* + 2 u \phi^*\big)
\end{array}\right.\label{KP-ZS2}
\end{gather} 
and
\begin{gather*}
\left\{\begin{array}{l}
\phi_{x_3} = \phi_{3 x_1} + 3 u \phi_{x_1} + \frac{3}{2} (u_{x_1}+v) \phi\\[-3mm]\\
\phi_{x_3}^* = \phi_{3 x_1}^* + 3 u \phi_{x_1}^* + \frac{3}{2} (u_{x_1}-v) \phi^*~\!.
\end{array}\right.\label{KP-ZS3}
\end{gather*} 
These two sets of equations are compatible if and only if $u$ and $v$ satisfy the KP equation in the form,
\begin{equation}
u_{x_2} = v_{x_1}~\!,\qquad u_{x_3} = \frac{1}{4} (u_{3x_1} + 12 u u_{x_1}) + \frac{3}{4} v_{x_2}~\!,\label{KP}
\end{equation}
the first relation of which is trivial if one parametrizes $u$ and $v$ in terms of the tau functions as $u = \big(\log\tau\big)_{2 x_1}$ and $v = \big(\log\tau\big)_{x_1 x_2}~\!.$
The 3-constraint then implies
$$u_{x_3} = \big(\Omega(\phi,\phi^*)\big)_{2x_1} = \big(\phi\phi^*\big)_{x_1}\quad\text{and}\quad v_{x_3} = \big(\Omega(\phi,\phi^*)\big)_{x_1 x_2} = \big(\phi_{x_1}\phi^*- \phi\phi_{x_1}^*\big)_{x_1}~\!,$$ 
and therefore immediately yields the third order coupled system
\begin{equation}
\left\{\begin{array}{l}
u_{x_3} = \big(\phi\phi^*\big)_{\!x_1}\\
v_{x_3} = \big(\phi_{x_1}\phi^*- \phi\phi_{x_1}^*\big)_{\!x_1}\\[-3mm]\\
\phi_{x_3} = \phi_{3 x_1} + 3 u \phi_{x_1} + \frac{3}{2} (u_{x_1}+v) \phi\\[-3mm]\\
\phi_{x_3}^* = \phi_{3 x_1}^* + 3 u \phi_{x_1}^* + \frac{3}{2} (u_{x_1}-v) \phi^*~\!,
\end{array}\right.\label{Melnikov3}
\end{equation}
which is however {\em not} the lowest member in its associated hierarchy of commuting flows. The lowest member is actually obtained from \eqref{KP-ZS2} by imposing the KP equation \eqref{KP} under the 3-constraint:
\begin{gather}
\left\{\begin{array}{l}
u_{x_2} = v_{x_1}\\[-4mm]\\
3 v_{x_2} + u_{3x_1} + 12 u u_{x_1}  = 4 \big(\phi \phi^*\big)_{x_1}\\[-4mm]\\
\phi_{x_2} = \phi_{2 x_1} + 2 u \phi\\[-3mm]\\
\phi_{x_2}^* = - \big(\phi_{2 x_1}^* + 2 u \phi^*\big)~\!,
\end{array}\right.\label{Melnikov2}
\end{gather} 
which is the Melnikov system \cite{Melnikov}. Because of the form of the nonlinear equations for $u$ and $v$, this system is often referred to as the ``Boussinesq equation with sources''. Although this is an integrable system in its own right, with a Lax pair, bi-Hamiltonian structure \cite{Oevel-etal} etc., from the above construction it seems quite natural for it to appear alongside the next member in the hierarchy, system \eqref{Melnikov3}. This situation is typical for all higher order constrained systems, i.e. beyond the Yajima-Oikawa system. As we will see next, this also seriously complicates the construction of any isolated discrete counterparts.

It is quite easy to convince oneself that a discrete constraint \eqref{constraint} that only involves two (or less) lattice directions, can never yield the 3-constraint \eqref{3constraint} at the continuum limit. The first possibility therefore arises for the choice $S= T_l T_m T_n$, involving all three lattice directions in a single HM equation \eqref{HMs}. However, as in the case of the Yajima-Oikawa reduction,  the lattice parameters in those three directions ($\lambda, \mu$ and $\nu$) have to be restricted as well, in order to find the correct limit. In short, restricting the lattice parameters by
$$\lambda = \omega^2 \mu~\!, \quad \nu = \omega \mu~\!,\qquad \text{for~} \omega\in\mathbb{C}:\ \  \omega^2+\omega + 1=0~\!,$$
and choosing $\gamma= \lambda\mu\nu=\mu^3$, the constraint \eqref{constraint} becomes
\begin{equation}
\mu^3 (\tau_{lmn}-\tau) = \tau~\! \Omega(\phi,\phi^*)~\!,\label{Melconstr}
\end{equation}
which indeed yields \eqref{3constraint} at the limit $|\mu|\rightarrow\infty$. As before, the functions $\phi$ and $\phi^*$ are required to satisfy the equations \eqref{phidefeq} and \eqref{phistdefeq} in all possible lattice directions.

It is now straightforward to construct a discrete system that, at its continuum limit, will generate the Melnikov system \eqref{Melnikov2}. Eliminating all $l$ shifts in \eqref{HMs} by means of the constraint \eqref{Melconstr}, one obtains the system
\begin{equation}
\left\{\begin{array}{l}
\phi_{mn} = \dfrac{1}{1-\omega}~\!\dfrac{\tau_m\tau_n}{\tau\tau_{mn}} \big(\phi_n - \omega\phi_m\big)\\[-2mm]\\
\phi^* = \dfrac{1}{1-\omega}~\!\dfrac{\tau_m\tau_n}{\tau\tau_{mn}} \big(\phi_m^* - \omega\phi_n^*\big)\\[-2mm]\\
\tau_{mn}\tau_{m'n'}~\! \Omega_{m'n'}^+~\! + ~\! \omega~\! \tau_m\tau_{m'}~\! \Omega_{m'}^+  ~\!  =~\!  (1+\omega)~\! \tau_n\tau_{n'}~\! \Omega_{n'}^+ 
\end{array}\right.\label{dMel2}
\end{equation}
where $\Omega^+ := \Omega(\phi,\phi^*) + \mu^3$. Note that the only reason for introducing this explicit shift $\mu^3$ in the squared eigenfunctions, is that it is the correct ansatz for the continuum limit. Taking $|\mu|\rightarrow\infty$, we obtain the Melnikov system \eqref{Melnikov2} but with the equations for $u$ and $v$ replaced by the single equation
$$\big(3 D_{x_2}^2 + D_{x_1}^4\big) \tau\cdot\tau = 8 \bar\tau \tau^*~\!,$$
when $\phi$ and $\phi^*$ are parametrized as in \eqref{phietcdef}. This equation is in fact the potential form of the equations for $u$ and $v$ in \eqref{Melnikov2}. Hence, although it is very closely related to the Melnikov system and although it can be argued that it is an integrable system as well, it is clear that the system \eqref{dMel2} cannot be obtained as the compatibility condition of a Lax pair. To obtain a genuine discretisation of \eqref{Melnikov2} one would need to take exactly the correct discrete `derivative' of the equation for $\tau$ and $\Omega^+$, possibly after implementing a clever dependent variable transformation that would eliminate the latter variable from the equation. However, even if one succeeds in doing so, this will not eliminate all problems. The biggest hurdle to obtaining an integrable discretisation of the Melnikov system is the construction of its Lax pair. Note that the system \eqref{dMel2} was expressed in $\phi$ and not in the function $\fie$ of Lemma \ref{crucial lemma}. This is to eliminate all occurrences of the $l$ direction, which would have resurfaced through the shift operator $S$ had we used $\fie$ in the equations. Similarly, from the construction of the Lax pair, and especially because of \eqref{psiconstraint} in Lemma \ref{LPfromHM}, it is clear that it is extremely difficult to eliminate the $l$ direction from the Lax pair and obtain equations defined on a 2 dimensional lattice.
One way to circumvent this problem would be to give up the ambition of constructing a discretisation of just \eqref{Melnikov2}, and to aim for a discrete system that contains both \eqref{Melnikov2} and \eqref{Melnikov3} at its continuum limit. The following, rather simple, system is in fact just that, 
\begin{equation}
\left\{\begin{array}{l}
\phi_{mn} = \dfrac{1}{1-\omega}~\!\dfrac{\tau_m\tau_n}{\tau\tau_{mn}} \big(\phi_n - \omega\phi_m\big)\\[-2mm]\\
\phi_{lm} = \dfrac{1}{1-\omega^2}~\!\dfrac{\tau_l\tau_m}{\tau\tau_{lm}} \big(\phi_l-\omega^2\phi_m\big)\\[-2mm]\\
\phi^* = \dfrac{1}{1-\omega}~\!\dfrac{\tau_m\tau_n}{\tau\tau_{mn}} \big(\phi_m^* - \omega\phi_n^*\big)~=~\! \dfrac{1}{1-\omega^2}~\!\dfrac{\tau_l\tau_m}{\tau\tau_{lm}} \big(\phi_m^*-\omega^2\phi_l^* \big)\\[-2mm]\\
\mu^3 \Delta_m \dfrac{\tau_{lmn}}{\tau} = \phi^*\phi_m
\end{array}\right.\label{dMel23}
\end{equation}
albeit for the potential forms of the equations for $u$ and $v$ that appear in these systems.

Note that also in the discrete case, the Melnikov system \eqref{dMel2} is related to the Boussinesq system. If we impose $\phi=\phi^*=0, ~\!\Omega^+=\mu^3$ on this system, we obtain
\begin{equation}
\tau_{mn}\tau_{m'n'}~\! + ~\! \omega~\! \tau_m\tau_{m'}~\! =~\!  (1+\omega)~\! \tau_n\tau_{n'}~\!,\label{dBsq-BF}
\end{equation}
which is the Hirota bilinear form of the discrete Boussinesq equation \cite{DateIII}. Introducing the variable $W := \tau/\tau_{n'}$, this bilinear form can be shown to be related to the equation
\begin{equation}
\dfrac{W_{mn}}{W_n} - \dfrac{W_{n'}}{W_{m'n'}} = \omega \Big(\dfrac{W_{n'}}{W_m} - \dfrac{W_{m'}}{W_{n}}\Big)~\!,\label{dBsq}
\end{equation}
which, through the ansatz $W = 1 + (\omega^2/\mu)~\! w + {\mathcal O}(1/\mu^2)$, contains the Boussinesq equation
$$w_{2x_2} + w_{4x_1} + 12 w_{x_1} w_{2x_1} =0$$
at its continuum limit. (Compare this equation to the $u,v$ part of \eqref{Melnikov2} with $u=w_{x_1}$ and $v=w_{x_2}$.) 
The system \eqref{dBsq} arises as the compatibility condition of the Lax pair
\begin{equation*}
\left\{\begin{array}{l}
\psi_{m'} = \omega \big( \psi_{n'} + (1-\omega)  \frac{W}{W_{m'}} \psi\big)\\[-2mm]\\
\psi_{n'n'n'} + (\omega-1) \big( \frac{W_{m'n'}}{W} +(1+\omega) \frac{W_{n'n'}}{W_{m'n'n'}}\big) \psi_{n'n'} + 3 \frac{W_{n'}}{W} \psi_{n'} = \kappa \psi~\!,
\end{array}\right.
\end{equation*}
which has the usual third order Lax pair for the Boussinesq equation
\begin{equation*}
\left\{\begin{array}{l}
\psi_{x_2} = \psi_{2x_1} + 2 w_{x_1} \psi\\[-2mm]\\
\psi_{3x_1} + 3 w_{x_1} \psi_{x_1} + \dfrac{3}{2} (w_{2x_1} + w_{x_2}) \psi = K \psi~\!,
\end{array}\right.
\end{equation*}
 as its continuum limit.

Under the above condition, $\phi=\phi^*=0, ~\!\Omega^+=\mu^3$, the constraint reduces to $\tau_{lmn} = \tau$, which is still expressible on a single HM equation, hence the bilinear equation \eqref{dBsq-BF}. Moreover, it is exactly the type of quasi-periodic reduction of the discrete KP hierarchy that is known to yield discretisations of continuous systems that are related to principal realizations of $A_n^{(1)}$-type, when accompanied by the appropriate restrictions on the lattice parameters. Reductions of the HM equation that do not respect these conditions on the parameters are also known \cite{MarunoK}, but it is not clear how the Miwa transformation relates these systems to their continuous counterparts and, as a result, their associated symmetry algebras are not understood.  In this case however, since the continuum limit goes through without any problems, one can conclude that the system is associated to the $A_2^{(1)}$ algebra. The original Melnikov system, however, is known to be related to a realization of the $A_3^{(1)}$ algebra \cite{RW-JMP2}. Hence, although the equations themselves appear to be related, it is hard to imagine that their solution spaces will have much in common. In fact, for the continuous Boussinesq equation, the only known class of solutions that can be generated using bi-directional Wronskians such as those that arise for the Melnikov system, consists essentially of polynomials. The soliton solutions in particular have a different structure and the exact nature of the interrelation of the Melnikov and Boussinesq equations, in both the discrete and continuous realms, remains an interesting open problem.

\section{Prospects}\label{discussion}
In the preceding sections we explained how to construct discrete integrable versions of the systems obtained through so-called symmetry reductions of the KP hierarchy. We succeeded in obtaining not only the integrable discretisations of the NLS, Broer-Kaup and Yajima-Oikawa equations, but in fact also of their entire associated hierarchies. It is an interesting open problem to try to understand the recursive formulations of the continuous hierarchies that are obtained from these discretisations, in relation to the recursion operators or bi-Hamiltonian structures that generate the continuous hierarchies. It seems highly unlikely that no explicit link between these two descriptions exists.

The discretisations of the NLS and Yajima-Oikawa equations are particularly interesting because of the known links of their continuous counterparts to Painlev\'e equations. In fact, it might be interesting to try to find non-autonomous extensions of these discrete systems, and to study their reductions to non-autonomous mappings. It is worth pointing out however that the constraint-based construction that led to these discrete systems, does not seem to be applicable to the case of a non-autonomous lattice. The most immediate (and most important problem) being that there does not seem to exist a non-autonomous version of Lemma \ref{crucial lemma}.

All the discrete systems constructed here correspond to  $A_n^{(1)}$-type symmetry algebras, albeit in realizations that (apart from the NLS-case which was obtained in \cite{DateIV}) had not been studied in a discrete context,  until now. However, there exists an intriguing escape route that might make it possible to construct at least one system with a completely different underlying symmetry algebra. Just as  with \eqref{Melnikov2} and the discrete Boussinesq equation, there exists a (further) reduction of \eqref{Melnikov3} to a system associated to a completely different symmetry algebra. The `self-adjoint' reduction $\phi=\phi^*, v=0$ of the system \eqref{Melnikov3}
\begin{equation*}
\left\{\begin{array}{l}
u_{x_3} = \big(\phi^2\big)_{\!x_1}\\[-3mm]\\
\phi_{x_3} = \phi_{3 x_1} + 3 u \phi_{x_1} + \frac{3}{2} u_{x_1} \phi
\end{array}\right.
\end{equation*}
 is in fact known \cite{SidorenkoS} to be related to the $D_3^{(2)}$ algebra \cite{Wilson} . If it would be possible to implement this reduction on the discrete system \eqref{dMel23} (or on a more sophisticated version of it), one would obtain what might be the first example of an integrable additive (2 dimensional) discretisation of a system associated with an affine algebra that is not of $A_n^{(1)}$ or $A_{2n}^{(2)}$ type. Such a reduction on a discrete system must however be quite subtle, as requiring self-adjointness most often entails the identification of several directions on the lattice, which is to be avoided at all cost.

\appendix
\section{\bf Some facts concerning the discrete KP hierarchy}\label{app}
In this appendix it will be shown that the discrete KP hierarchy of Y. Ohta et al., defined in \cite{Ohta-dKP}, is generated by a rather small set of coupled Hirota-Miwa equations. The notation used in the proof of this statement differs slightly from that in the main body of the paper since we shall need an arbitrary number of lattice directions. Let us define tau functions $\tau : ~~ \mathbb{Z}^n~ \longrightarrow~ \mathbb{C}$ on an $n$ dimensional lattice, the directions on which will be denoted as $\ell_i~(i=1, 2, \hdots, n)$. We shall write $\tau(\ell_1, \hdots, \ell_n)$ for the value the tau function takes at the lattice site $(\ell_1, \hdots, \ell_n)$. A shift operation on this $n$ dimensional lattice, say in the direction $\ell_j$, will be denoted as $\s_j[\cdot]$ :
$\s_j[\tau] := \tau(\ell_1,\hdots, \ell_{j-1}, \ell_j+1, \ell_{j+1}, \hdots, \ell_n)$ or, alternatively, by a subscript to the symbol $\tau$ : $ \s_j[\tau] \equiv \tau_j$ (and hence, $\tau_{ij}$ will denote $\big(\s_i\s_j\big)[\tau], etc.$).
Successive shifts will be denoted as
$$ \ti{j}{j+\kappa} :=~\Big(\prod_{k=0}^\kappa \s_{j+k}\Big)[\tau]~\!,$$
and successive shifts with a shift in a single direction lacking as:
$$\tih{j\cdd}{i}{\cdd j+\kappa}:=~\Big(\prod_{\stackrel{k=0}{_{j+k\neq i}}}^\kappa \s_{j+k}\Big)[\tau]\qquad{ (j\leq i\leq j+\kappa)}~\!.$$

Similarly, $\Vi{j}{j+\kappa}$ denotes the Vandermonde determinant 
$$\Vi{j}{j+\kappa} := \prod_{\stackrel{k,m=j}{_{k>m}}}^{j+\kappa} (a_k-a_m)~,$$
and $\Vih{j\cdd}{i}{\cdd j+\kappa}$ denotes a Vandermonde determinant with one column (and its last row) missing:
$$\Vih{j\cdd}{i}{\cdd j+\kappa} := \prod_{\stackrel{k,m=j}{_{k,m\neq i, k>m}}}^{j+\kappa} (a_k-a_m)\qquad { (j\leq i\leq j+\kappa)}~.$$

Finally, by $E_{i j k}$ ($i<j<k$) we denote the Hirota-Miwa equation:
$$(a_k-a_j) \tau_i \tau_{jk} + (a_i-a_k) \tau_j \tau_{ik} + (a_j-a_i) \tau_k \tau_{ij} = 0~\!,$$
where $a_i, a_j, a_k$ are the (reciprocals of) the lattice parameters in the directions $\ell_i, \ell_j$ and $\ell_k$, respectively.

In \cite{Ohta-dKP} it is shown that for increasing $n\geq3$, the $n$ term bilinear relation 
\begin{gather}
\begin{vmatrix}
\tau_1~\! \ti{2}{n}& \tau_2~\! \tih{1}{2}{\cdd n} &\cdots & \tau_{n-1}~\!\tih{1\cdd}{n-1}{n} & \tau_n~\!\ti{1}{n-1}\\
1 & 1 & \cdots & 1 & 1\\
a_1 & a_2 &\cdots & a_{n-1} & a_2\\
\vdots & \vdots& \ddots & \vdots & \vdots\\
a_1^{n-2} & a_2^{n-2} & \cdots & a_{n-1}^{n-2} & a_n^{n-2}
\end{vmatrix} = 0~\!,\label{dbKP}
\end{gather}
generates, at the continuum limit, the entire bilinear KP hierarchy and can therefore be regarded as its discrete version. Relation \eqref{dbKP}, for arbitrary $n$, is therefore commonly referred to as the {\em discrete KP hierarchy}. Note that for $n=3$, the determinant in \eqref{dbKP} yields  nothing but the Hirota-Miwa equation $E_{1 2 3}$. However, the above definition of the discrete KP hierarchy is by no means elementary and in fact, the following proposition can be proven:

\begin{prop} The set of Hirota-Miwa equations
$$\mathbb{E} = \Big\{~\! E_{i j k}~\!\Big|~\! i,j,k,\in\{1,2, \hdots, n\},~ i<j<k~\!\Big\}$$
generates the $n$ term relation \eqref{dbKP}.
\end{prop}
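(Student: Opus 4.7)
The natural approach is induction on $n \geq 3$. The base case $n=3$ is immediate: expanding the $3\times 3$ determinant in \eqref{dbKP} along its first row produces precisely $(a_3-a_2)\tau_1\tau_{23}-(a_3-a_1)\tau_2\tau_{13}+(a_2-a_1)\tau_3\tau_{12}$, which is $E_{123}$. For the inductive step, I would assume that the $(n-1)$-term relation lies in the ideal generated by $\mathbb{E}$, applied to any subset of $n-1$ of the lattice directions (including, crucially, the tau functions shifted once in the ``missing'' direction). Expanding \eqref{dbKP} along its first row gives
$$D_n \;=\; \sum_{k=1}^{n} (-1)^{k+1}\, \Vih{1\cdd}{k}{\cdd n}\; \tau_k\, \tih{1\cdd}{k}{\cdd n},$$
and the goal is to show $D_n \equiv 0$ modulo $\mathbb{E}$.

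The plan for the inductive step is to treat the direction $\ell_n$ as auxiliary. For each $k\in\{1,\dots,n-1\}$ the factor $\tih{1\cdd}{k}{\cdd n}$ carries a $\s_n$ shift, so one can apply the HM equation $E_{k,n-1,n}$ (or, for $k=n-1$, $E_{1,n-1,n}$), appropriately shifted by the remaining directions $\{1,\dots,n-2\}\setminus\{k\}$, to rewrite $\tau_k\tih{1\cdd}{k}{\cdd n}$ as an $(a_n-a_{n-1})$-weighted combination of two other tau products in which the $\s_n$-shift has been ``transferred''. After this single substitution is carried out in every term of the sum and the Vandermonde coefficients are redistributed via partial-fraction identities among the $(a_n-a_j)$ factors, I expect the sum to collapse into a linear combination of two instances of the $(n-1)$-term relation: one for directions $\{1,\dots,n-2,n\}$ and one for $\{1,\dots,n-1\}$ (the latter evaluated on the $\s_n$-shifted tau function). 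Both vanish by the inductive hypothesis.

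The main obstacle is the combinatorial bookkeeping that verifies the coefficients match. The key identity likely needed is a Vandermonde factorisation of the form
$$\Vih{1\cdd}{k}{\cdd n} \;=\; \Vih{1\cdd}{k}{\cdd n-1}\, \prod_{\substack{j=1 \\ j\neq k}}^{n-1}(a_n - a_j),$$
together with the partial-fraction expansion $\sum_{k} \bigl[(a_n-a_k)\prod_{j\neq k}(a_k-a_j)\bigr]^{-1}=0$ over the relevant indices, to absorb the $(a_n-a_j)^{-1}$ denominators introduced by solving each HM equation for the transferred term. A cleaner alternative, if the direct bookkeeping proves too cumbersome, is to build an auxiliary $(n+1)\times(n+1)$ determinant whose two different Laplace expansions yield the $n$-term relation on one side and a manifest linear combination of elements of $\mathbb{E}$ on the other; this would bypass the induction, at the cost of having to guess the right auxiliary array.
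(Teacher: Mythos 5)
Your base case is fine, but the inductive step breaks down at its very first move. For $n>3$ the product $\tau_k\,\tih{1\cdd}{k}{\cdd n}$ is \emph{not} a term of any Hirota--Miwa equation, shifted or not: the three products occurring in a translate $\s_S[E_{ijk}]$ have the form $(\s_S\tau)_i\,(\s_S\tau)_{jk}$, so the shift-sets of the two factors differ by exactly the three directions $\{i,j,k\}$, whereas the factors $\tau_k$ and $\tih{1\cdd}{k}{\cdd n}$ have shift-sets $\{k\}$ and $\{1,\dots,n\}\setminus\{k\}$, whose symmetric difference has size $n$. Concretely, if you shift $E_{k,n-1,n}$ by $S=\{1,\dots,n-2\}\setminus\{k\}$ so as to produce the factor $(\s_S\tau)_{(n-1)n}=\tih{1\cdd}{k}{\cdd n}$, its partner in that equation is $(\s_S\tau)_k=\ti{1}{n-2}$, not $\tau_k$; the ``transfer of the $\s_n$-shift'' therefore cannot be performed term by term on $D_n$. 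There is also a degree obstruction to your intended endpoint: the monomial $\tau_1\,\ti{2}{n}$ occurs in $D_n$ but in no translate of any $E_{ijk}$ and in neither of the two $(n-1)$-term relations you name (their monomials have factors whose shift-sets differ in $n-1$ directions), so no \emph{scalar} linear combination of those objects can reproduce $D_n$. Any identity relating $D_n$ to lower-order relations modulo $\mathbb{E}$ must be trilinear, i.e.\ must carry tau-function coefficients.

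That extra tau factor is precisely the ingredient the paper's proof supplies, and it is what your outline is missing. The paper forms $\big(\prod_{k=2}^{n-1}(a_n-a_k)\big)\,\tau_1\,\s_n[E_{1\cdd n-1}]-\big(\prod_{k=2}^{n-1}(a_k-a_1)\big)\,\tau_n\,\s_1[E_{2\cdd n}]$; in this trilinear expression the cross terms pair up as $\big[(a_n-a_j)\tau_1\tau_{jn}+(a_j-a_1)\tau_n\tau_{1j}\big]$ times the common factor $\tih{1\cdd}{j}{\cdd n}$, and the six tau values in the bracket all lie on the single $3$-face spanned by $\ell_1,\ell_j,\ell_n$ at the origin, so the \emph{unshifted} equation $E_{1jn}$ converts the bracket into $(a_n-a_1)\tau_j\tau_{1n}$; the whole combination then collapses to $\tau_{1n}\,E_{1\cdd n}$ and the induction closes. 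To salvage your argument you would need to introduce such auxiliary tau multipliers (and the accompanying Vandermonde bookkeeping) rather than substitute directly into $D_n$. Your fallback suggestion of an $(n+1)\times(n+1)$ determinant with two Laplace expansions is left entirely unspecified, so it cannot be credited as a proof either.
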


\begin{proof} First of all, note that the coefficient of the `$j$th' term in \eqref{dbKP}, i.e. the term $\tau_j~\!\tih{1\cdd}{j}{\cdd n}~\!$, is equal to $(-1)^{j+1} \Vih{1\cdd}{j}{\cdd n}$. Next, for $n$ sufficiently large, we consider two different $n-1$ term versions of the l.h.s. of \eqref{dbKP}, defined on the $n-1$ dimensional sub-lattices generated by the $\ell_1, \ell_2, \hdots, \ell_{n-1}$ and the $\ell_2, \ell_3, \hdots, \ell_{n}$ directions respectively. We shall denote these expressions as $E_{1\cdd n-1}$ and $E_{2\cdd n}$ :
\begin{gather*}
\begin{split}
E_{1\cdd n-1}~:\qquad \Vi{2}{n-1}~\! \tau_1~\! \ti{2}{n-1} - \Vih{1}{2}{\cdd n-1}~\! \tau_2~\! \tih{1 }{2}{\cdd n-1} \qquad\qquad\qquad\qquad\qquad \\ +\ \cdots\ + (-1)^n \Vi{1}{n-2}~\! \tau_{n-1} ~\!\ti{1}{n-2}\end{split}\\[-3mm]\\
E_{2\cdd n}~:\qquad \Vi{3}{n}~\! \tau_2~\! \ti{3}{n} - \Vih{2}{3}{\cdd n}~\! \tau_3~\! \tih{2 }{3}{\cdd n} + \cdots + (-1)^n \Vi{2}{n-1}~\! \tau_{n} ~\!\ti{2}{n-1}
\end{gather*}
For these expressions, we calculate the linear combination
\begin{gather}
\left(\prod_{k=2}^{n-1} (a_n-a_k)\right)~\!\tau_1~ \s_n[E_{1\cdd n-1}]  ~~\! - ~ \left(\prod_{k=2}^{n-1} (a_k-a_1)\right)~\!\tau_n~ \s_1[E_{2\cdd n}]~\!.\label{lc}
\end{gather}

The `$1$st' term, i.e. the term in $\tau_1 \tau_{1 n} \ti{2}{n}$, in the resulting expression for \eqref{lc} has 
\begin{equation}
\Vi{2}{n-1} \left(\prod_{k=2}^{n-1} (a_n-a_k)\right) = \Vi{2}{n}\label{cof1}
\end{equation}
as a coefficient, whereas the last term, i.e. that in $\tau_n~\! \tau_{1 n}~\! \ti{1}{n-1}$, has coefficient
\begin{equation}
-  \left(\prod_{k=2}^{n-1} (a_k-a_1)\right) (-1)^n \Vi{2}{n-1} = (-1)^{n+1} \Vi{1}{n-1}~.\label{cofn}
\end{equation}
Moreover, for $j :~ 2\leq j\leq n-1$, the term in $\tau_j~\! \tih{1\cdd}{j}{\cdd n-1}$ in $E_{1\cdd n-1}$, combined with that in $\tau_j~\! \tih{2\cdd}{j}{\cdd n}$ in $E_{2\cdd n}$, gives rise to a contribution
\begin{multline}
(-1)^{j+1} \Vih{1\cdd}{j}{\cdd n-1} \left(\prod_{k=2}^{n-1} (a_n-a_k)\right) \tau_1~\! \tau_{j n}~\! \tih{1\cdd}{j}{\cdd n} \\
- (-1)^j \Vih{2\cdd}{j}{\cdd n}\left(\prod_{k=2}^{n-1} (a_k-a_1)\right) \tau_n~\!\tau_{1 j}~\!\tih{1\cdd}{j}{\cdd n} \\
=~ (-1)^{j+1} \tih{1\cdd}{j}{\cdd n}~\! \Vih{2\cdd}{j}{\cdd n}~\! \Big(\prod_{\stackrel{k=2}{_{k\neq j}}}^{n-1}(a_k-a_1)\Big)~\!\big[(a_n-a_j) \tau_1 \tau_{j n} + (a_j-a_1) \tau_n \tau_{1 j}\big]\label{cofj1}
\end{multline}
in the linear combination \eqref{lc}. Clearly, if the Hirota-Miwa equation $E_{1 j n}$ 
$$(a_n-a_j) \tau_1 \tau_{j n} + (a_1-a_n) \tau_j \tau_{1 n} + (a_j-a_1) \tau_n \tau_{1 j} = 0~\!,$$
in the $\ell_1, \ell_j$ and $\ell_n$ directions is satisfied, then the combination \eqref{cofj1} is equal to :
\begin{equation}
(-1)^{j+1} \Vih{1\cdd}{j}{\cdd n}~\! \tau_j~\!\tau_{1 n}~\!\tih{1\cdd}{j}{\cdd n}\qquad (~\!^\forall j~\!:~2\leq j\leq n-1)~\!.\label{cofj2}
\end{equation}

It then follows from \eqref{cof1}, \eqref{cofj2} and \eqref{cofn} that the linear combination \eqref{lc} takes the form
$$\tau_{1 n}~\! \sum_{j=1}^{n} (-1)^{j+1}~ \Vih{1\cdd}{j}{\cdd n}~\!\tau_j~\!\tih{1\cdd}{j}{\cdd n}  ~\equiv~ \tau_{1 n}~\! E_{1\cdd n}~\!,$$
where $E_{1\cdd n}$ is identical to the l.h.s. of the $n$ term bilinear relation \eqref{dbKP}.

Hence, one finds that the following induction proves the proposition. \medskip

At $n=4$ : ~by taking the combination \eqref{lc}, the Hirota-Miwa equations $E_{1 2 3}$ and $E_{2 3 4}$ generate the relation $E_{1 2 3 4}$ (along the way, we require $E_{1 2 4}$ and $E_{1 3 4}$ to be satisfied as well, in order to be able to perform a reduction such as that from \eqref{cofj1} to \eqref{cofj2}).
\medskip

At $n=5$ : ~ As for $n=4$, the Hirota-Miwa equations $E_{1 2 3}, E_{1 2 4}, E_{1 3 4}$ and $E_{2 3 4}$ generate $E_{1 \cdd 4}$,
 whereas the set of Hirota-Miwa equations $\{E_{2 3 4}, E_{2 3 5}, E_{2 4 5}, E_{3 4 5}\}$ will generate $E_{2\cdd  5}$. The remaining $\DIS \dbinom{5}{3} - (4+3) = 3 $ equations $\{ E_{1 2 5}, E_{1 3 5}, E_{1 4 5}\}$ are required (in the reduction \eqref{cofj1}$\rightarrow$\eqref{cofj2}) to generate $E_{1\cdd 5}$ from $E_{1\cdd 4}$ and $E_{2\cdd 5}$.\medskip
 
 In general, the $\binom{n-1}{3}$  Hirota-Miwa equations that do not involve the $\ell_n$ direction are all needed to generate $E_{1\cdd n-1}$. The same number of equations involving $\ell_2, \hdots, \ell_n$ is required to construct $E_{2\cdd n}$ but among these there are $\binom{n-2}{3}$ equations that do  not involve $\ell_n$ and that were already used before. Hence, the number of extra equations required to construct $E_{2\cdd n}$ is $\binom{n-1}{3}-\binom{n-2}{3} = \binom{n-2}{2}$. Add to this the $n-2$ equations of the form $E_{1 j n}$ ($2\leq j\leq n$) needed to obtain $E_{1\cdd n}$ from $E_{1\cdd n-1}$ and $E_{2\cdd n}$ (in the reduction of coefficients like \eqref{cofj1} to \eqref{cofj2}), and we find that the total number of different Hirota-Miwa equations used in the construction is 
 \begin{gather*}
 \binom{n-1}{3} + \binom{n-2}{2} + (n-2) =  \binom{n-1}{3} + \binom{n-1}{2} = \binom{n}{3}~\!,
 \end{gather*}
 i.e., exactly the number of equations in the set $\mathbb{E}$.
 \end{proof}
 
 Moreover, it is easily seen that, as a generating set for the discrete KP hierarchy \eqref{dbKP}, $\mathbb{E}$ is not minimal. Singling out one particular direction (say, $\ell_1$, for simplicity) as special, one has the following :
 
 \begin{prop}
 The set of Hirota-Miwa equations $\mathbb{E}$ is generated by the $\binom{n-1}{2}$ equations in
$$\Big\{~\! E_{1 j k}~\!\Big|~ j,k,\in\{2, \hdots, n\},~ j<k~\!\Big\}~\!.$$
This set is (obviously) minimal.
 \end{prop}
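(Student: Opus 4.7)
The plan is to exhibit a single, elementary linear identity that expresses every Hirota-Miwa equation $E_{ijk}$ with $2\le i<j<k\le n$ (i.e.\ every equation in $\mathbb{E}$ not involving the direction $\ell_1$) as an explicit consequence of the three equations $E_{1ij}$, $E_{1ik}$ and $E_{1jk}$, all of which belong to the proposed generating set. Concretely, I would verify by direct expansion the algebraic identity
\begin{equation*}
\tau_i\cdot E_{1jk}\;-\;\tau_j\cdot E_{1ik}\;+\;\tau_k\cdot E_{1ij}\;=\;\tau_1\cdot E_{ijk}\;.
\end{equation*}
The verification is a short bookkeeping exercise: when the three left-hand side expressions are written out, the three terms containing a factor $\tau_1$ combine (after using $-(a_k-a_i)=a_i-a_k$) into precisely $\tau_1\bigl[(a_k-a_j)\tau_i\tau_{jk}+(a_i-a_k)\tau_j\tau_{ik}+(a_j-a_i)\tau_k\tau_{ij}\bigr]$, while the remaining six terms, each of shape $(a_\alpha-a_\beta)\tau_{1\gamma}\tau_\alpha\tau_\beta$ with $\{\alpha,\beta,\gamma\}=\{i,j,k\}$, split into three pairs that cancel because of the antisymmetry of the coefficients produced by the alternating signs of $\tau_i,-\tau_j,\tau_k$.

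Once this identity is in hand, the proof is essentially immediate. For any triple $2\le i<j<k\le n$ the three HM equations on the left hand side lie in the set $\{E_{1jk}\,:\,2\le j<k\le n\}$ and therefore vanish on any common solution. Consequently $\tau_1\cdot E_{ijk}=0$, and since the tau functions introduced in Definition~\ref{taudef} are representatives of equivalence classes under scaling by nonzero constants (so $\tau$ is generically nonvanishing, and $\tau_1$ may be divided out as a free symbol in the polynomial identity), this yields $E_{ijk}=0$. Together with the preceding proposition, this shows that $\{E_{1jk}\,:\,2\le j<k\le n\}$ generates the full set $\mathbb{E}$ and hence the entire discrete KP hierarchy~\eqref{dbKP}.

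For minimality, I would observe that each $E_{1jk}$ is the unique equation in the proposed set whose monomial content $\{\tau_1\tau_{jk},\,\tau_j\tau_{1k},\,\tau_k\tau_{1j}\}$ is supported on that particular unordered pair $\{j,k\}\subset\{2,\ldots,n\}$; any other equation $E_{1j'k'}$ from the set involves a different pair, and no linear combination of them can reproduce the $\tau_1\tau_{jk}$ term. Thus no $E_{1jk}$ is derivable from the others, and the generating set is minimal with exactly $\binom{n-1}{2}$ elements.

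The only nontrivial step is the identity itself, so the main obstacle is really just the clean presentation of the cancellation pattern; no induction, no shifts, and no higher-term determinantal manipulations are needed, in sharp contrast with the previous proposition.
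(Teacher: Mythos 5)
Your proposal is correct and follows essentially the same route as the paper: both derive $E_{ijk}$ (for $2\le i<j<k\le n$) from the three equations $E_{1ij}$, $E_{1ik}$, $E_{1jk}$ alone, arriving at $\tau_1\cdot E_{ijk}=0$ --- the paper by solving each for $\tau_1$ and eliminating the ratios $\tau_{1j}/\tau_j$, $\tau_{1k}/\tau_k$, $\tau_{1\ell}/\tau_\ell$, you by the explicit syzygy $\tau_i\,E_{1jk}-\tau_j\,E_{1ik}+\tau_k\,E_{1ij}=\tau_1\,E_{ijk}$, which checks out term by term. Your packaging of the cancellation as a single identity, and your spelled-out minimality argument, are somewhat more explicit than the paper's terse presentation, but the underlying idea is identical.
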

 
 \begin{proof} In general, the $E_{1 j k}$ equation
 $$(a_k-a_j) \tau_1 \tau_{jk} + (a_1-a_k) \tau_j \tau_{1 k} + (a_j-a_1) \tau_k \tau_{1 j} = 0~\!,$$
 can be rewritten as:
 \begin{equation}
 \tau_1 = \frac{1}{a_k-a_j}~\! \frac{\tau_j \tau_k}{\tau_{jk}}~\! \big[ (a_k- a_1) \frac{\tau_{1 k}}{\tau_k} + (a_1-a_j) \frac{\tau_{1 j}}{\tau_j}\big]~\!, 
\label{almostlinear} 
 \end{equation}
 and similarly, from $E_{1 j \ell}$ and $E_{1 k \ell}$ ($j<k<\ell$), one has that
 \begin{align*}
  \tau_1 =& \frac{1}{a_\ell-a_j}~\! \frac{\tau_j \tau_\ell}{\tau_{j\ell}}~\! \big[ (a_\ell- a_1) \frac{\tau_{1 \ell}}{\tau_\ell} + (a_1-a_j) \frac{\tau_{1 j}}{\tau_j}\big]\\
 =& \frac{1}{a_\ell-a_k}~\! \frac{\tau_k \tau_\ell}{\tau_{k\ell}}~\! \big[ (a_\ell- a_1) \frac{\tau_{1 \ell}}{\tau_\ell} + (a_1-a_k) \frac{\tau_{1 k}}{\tau_k}\big]~,
 \end{align*}
from which one immediately finds that
 $$\tau_1 ~\!\big[ (a_\ell - a_k) \tau_j \tau_{k \ell} + (a_j - a_\ell) \tau_k \tau_{j \ell} + (a_k - a_j) \tau_\ell \tau_{j k}\big] = 0~\!.$$
 \end{proof}

Hence, it becomes clear that one could in fact choose any particular direction on the lattice and use the other (infinitely many!) directions to define `deformation' equations for the dependence of the eigenfunctions on the preferred direction, very much as is done in classical Sato theory. This is the approach taken in \cite{Glasgowpap}, to which the reader is referred for further details.

\begin{remark}
It is a well-known fact that the relation \eqref{almostlinear} generates a linear equation for the ratio $\tau_1/\tau$, which is -- up to a mere gauge transformation -- part of the adjoint linear problem \eqref{dKP-aLP} for the discrete KP hierarchy. Indeed, as was done in Lemma \ref{LPfromHM} for the linear problem, it suffices to define
$$\psi^* := \frac{\tau_1}{\tau} \prod_{s=2}^n \left(\frac{a_s-a_1}{a_s}\right)^{\ell_s}$$
to obtain
$$\psi^* = \frac{1}{a_k-a_j}~\! \frac{\tau_j \tau_k}{\tau \tau_{jk}}~\! \big[ a_k\psi^*_k -a_j \psi^*_j\big]~\!,$$
where, as for the tau functions, subscripts are used to denote shifts in the indicated direction for the function $\psi^*$. This fact is just one incarnation of the fundamental property of the discrete KP hierarchy, that a shift on the lattice is indistinguishable from a Darboux transformation
$$\tau~\mapsto~\tau~\! \psi^* \sim \tau_1$$
for the tau functions (defined, in this case, in terms of an adjoint eigenfunction $\psi^*$). This property, in fact, lies at the heart of extensions of the discrete KP theory to the non-commutative case \cite{Nimmo-noncom}.
\end{remark}


\end{document}